\tikzstyle{solid}=                   [dash pattern=]
\tikzstyle{dotted}=                  [dash pattern=on \pgflinewidth off 2pt]
\tikzstyle{densely dotted}=          [dash pattern=on \pgflinewidth off 1pt]
\tikzstyle{loosely dotted}=          [dash pattern=on \pgflinewidth off 4pt]
\tikzstyle{dashed}=                  [dash pattern=on 3pt off 3pt]
\tikzstyle{densely dashed}=          [dash pattern=on 3pt off 2pt]
\tikzstyle{loosely dashed}=          [dash pattern=on 3pt off 6pt]
\tikzstyle{dashdotted}=              [dash pattern=on 3pt off 2pt on \the\pgflinewidth off 2pt]
\tikzstyle{densely dashdotted}=      [dash pattern=on 3pt off 1pt on \the\pgflinewidth off 1pt]
\tikzstyle{loosely dashdotted}=      [dash pattern=on 3pt off 4pt on \the\pgflinewidth off 4pt]
\newtheorem{theorem}{Theorem}
\newtheorem{lemma}{Lemma}
\theoremstyle{definition}
\newtheorem{defi}{Definition}
\let\olddefi\defi
\renewcommand{\defi}{\olddefi\normalfont}
\newtheorem{example}{\textsc{Example}}
\newtheorem{remark}{\textbf{Remark}}
\theoremstyle{remark}
\newtheorem{claim}{\textsc{Claim}}
\newtheorem*{claim*}{\textsc{Claim}}
   \def\MR#1{}
\providecommand{\MR}[1]{}
\providecommand{\MR}{\relax\ifhmode\unskip\space\fi MR }
\providecommand{\href}[2]{#2}
\DeclareMathSymbol{\widehatsym}{\mathord}{largesymbols}{"62}
\begin{document}
\title{Behavioral Equivalence of Extensive Game Structures}

\author{Pierpaolo Battigalli}
\address{Universit\`a ``Luigi Bocconi''---via Roentgen 1, 20136 Milano, Italy}
\email{pierpaolo.battigalli@unibocconi.it}

\author{Paolo Leonetti}
\address{Universit\`a ``Luigi Bocconi''---via Roentgen 1, 20136 Milano, Italy}
\email{paolo.leonetti@unibocconi.it}

\author{Fabio Maccheroni}
\address{Universit\`a ``Luigi Bocconi''---via Roentgen 1, 20136 Milano, Italy}
\email{fabio.maccheroni@unibocconi.it}


\makeatletter
\@namedef{subjclassname@1991}{JEL code}
\makeatother
\subjclass{C72}

%

\keywords{Extensive game structure; behavioral equivalence; invariant transformations.}

\begin{abstract}
\noindent{} Two extensive game structures with imperfect information are
said to be behaviorally equivalent if they share the same map (up to
relabelings) from profiles of structurally reduced strategies to induced
terminal paths. We show that this is the case if and only if one can be
transformed into the other through a composition of two elementary
transformations, commonly known as \textquotedblleft Interchanging of
Simultaneous Moves\textquotedblright\ and \textquotedblleft Coalescing Moves/Sequential Agent Splitting.\textquotedblright 
\end{abstract}
\maketitle
\thispagestyle{empty}

\section{Introduction}\label{sec:introduction}

Fix a player, viz. player $i$, in the extensive-form representation of a
finite game. Say that two strategies $s_{i}^{\prime }$ and $s_{i}^{\prime
\prime }$ of player $i$ are \textbf{behaviorally equivalent} if they are
consistent with (do not prevent from being reached) the same collection of
information sets of player $i$ and behave in the same way at such
information sets. Two strategies $s_{i}^{\prime }$ and $s_{i}^{\prime \prime
}$ are realization-equivalent if, for every given strategy profile $%
s_{-i}$ of the opponents (including the chance player), $s_{i}^{\prime }$
and $s_{i}^{\prime \prime }$ induce the same play path, or terminal node 
(of course, the induced terminal node may depend on $s_{-i}$). Kuhn proved
that these two equivalence relations coincide, see \cite[Theorem 1]{MR0054924}. Intuitively, two strategies of player $i$ are equivalent in
this sense, if they respond in the same way to information about other
players, but may respond in different ways to past moves of $i$. This
corresponds to a notion of \textquotedblleft forward
planning,\textquotedblright\ such as planning the sequence of actions to be
taken in a one-person game (without chance moves). For this reason, some
authors call \textquotedblleft plan of action\textquotedblright\ such an
equivalence class of strategies, cf. \cite[Section 6.4]{MR1301776} and \cite%
{Rubinstein91}. 
Several authors studying the foundations of game theory
put forward solution concepts that do not distinguish between behaviorally
equivalent strategies. Such solution concepts rely on the following notion
of sequential best reply. For each of his 
information sets, player $i$ has a probabilistic belief about the strategies
of the opponents and the resulting system of beliefs satisfies the rules of
conditional probability whenever they apply. A strategy $s_{i}^{\star }$ is a 
\textbf{sequential best reply} to such system of beliefs if, for each
information set of $i$ reachable under $s_{i}^{\star }$, there is no
alternative strategy that yields a higher expected utility given the belief
at such information set. Different notions of \textquotedblleft
extensive-form rationalizability\textquotedblright\ put forward and applied
in the last four decades (e.g., Pearce \cite{Pearce}, Ben Porath \cite%
{BenPorath}, Battigalli \cite{Battigalli2003}, Battigalli and Siniscalchi 
\cite{MR2001808}) 
rely on this notion of sequential best reply and therefore do not
distinguish between behaviorally equivalent strategies. 
These solution concepts have been
rigorously justified by explicit and formal assumptions about strategic
reasoning. See, for example, the surveys of Battigalli and Bonanno \cite%
{BattigalliBonanno1999a}, and Dekel and Siniscalchi \cite{Dekel}, the
textbook of Perea \cite{MR3155151}, 
and the relevant references therein for rigorous and transparent epistemic
justifications of these solution concepts.\footnote{%
There are also extensive-form refinements of the Nash equilibrium concept,
such as Reny's explicable equilibrium \cite{MR1163001} that do not
distinguish between behaviorally equivalent strategies.}

Since behavioral and realization equivalence coincide, if two strategies are behaviorally equivalent, they are necessarily payoff equivalent; but it is easy to show by example that the converse is not true (see below). 
Thus, behavioral equivalence, sequential best reply, and the related solution concepts are not normal-form invariant. It is also clear from the definition of
behavioral equivalence and its coincidence with realization equivalence that
the only relevant part of a game in extensive form that matters to ascertain
such equivalence is the \textbf{game structure} (that is, the game tree with
the information structure), not the map from terminal nodes to consequences,
or payoffs. Call a class of behaviorally equivalent strategies a \textbf{%
structurally reduced strategy.} A transformation of a game structure
preserves behavioral equivalence if, up to relabeling, the new game
structure has the same set of structurally reduced strategies for each
player, the same set of terminal nodes, and the same map from profiles of
structurally reduced strategies to terminal nodes as the initial game
structure. In this case, we say that two game structures are \textbf{%
behaviorally equivalent. }

As anticipated, the notions of realization equivalence and behavioral equivalence are related to, but distinct from payoff equivalence of
strategies in a game, and normal-form equivalence of two games in extensive
form. The latter relations can be defined only for fully specified games whereby
each terminal node is associated with a corresponding profile of payoffs.
Two behaviorally equivalent strategies are necessarily \textbf{payoff
equivalent}, that is, independently of how the co-players' behavior is
fixed, they yield the same profile of payoffs. But the possibility of ties
between terminal nodes implies that the converse is not true. The standard
notion of equivalence between games requires the same map (up to relabeling)
between profiles of classes of payoff equivalent strategies and payoff
profiles. If we assign the same payoff functions (up to relabeling of
terminal nodes) to two behaviorally equivalent game structures, we obtain
two games in extensive form that are equivalent in the (less demanding)
traditional sense, that is, they have the same reduced normal form. But such
traditional equivalence may also hold between non behaviorally equivalent
games. For example, a (non trivial) $2$-person game with perfect
information and a game with (essentially) simultaneous moves with the same
normal form are not behaviorally equivalent.

Since there are well-known and relevant solution concepts that are not
normal-form invariant, but yield the same solutions (up to relabeling) for
games with the same structure and the same payoff functions, we find it
interesting to provide a characterization of behavioral equivalence by means
of invariant transformations. We consider a class of extensive-form
structures allowing a direct representation of simultaneous moves whereby
more than one player may be active at a given node; see, e.g., \cite[Section
6.3.2]{MR1301776}. Therefore, besides the standard transformation of
\textquotedblleft Interchanging\textquotedblright\ essentially simultaneous,
but formally sequential moves, we apply a \textquotedblleft
Simultanizing\textquotedblright\ transformation that makes such moves truly
simultaneous. We also consider the standard transformation of
\textquotedblleft Coalescing Moves\textquotedblright\, and its
inverse, that is, \textquotedblleft Sequential Agent Splitting\textquotedblright\ (see the examples below). With this, we prove the following result: 
\emph{Two finite game structures }$G$ \emph{and }$G^{\prime}$ 
\emph{are behaviorally equivalent if and only if there is a} 
\emph{sequence of transformations of the \textquotedblleft
Interchanging/Simultanizing\textquotedblright\ kind and \textquotedblleft
Coalescing Moves\textquotedblright\ kind and their inverses that connects }$%
G$ \emph{with }$G^{\prime}$\emph{.}

\begin{example}
\label{ex:coalescingexample} Consider player \textsl{1} in the extensive game structures $G$ and $G^{\prime }$ represented in Figure \ref{fig:coa1}.
Neither the consequence function nor the preference relations over the set
of terminal nodes $\{z_{1},z_{2},z_{3}\}$ are specified. In particular, the
\textquotedblleft normal-form\textquotedblright\ representations will be
defined as maps from strategy profiles to terminal nodes. It has been
argued that player \textsl{1} should regard $G$ and $G^{\prime }$ as
representing \textquotedblleft essentially\textquotedblright\ the same
situation, if his preferences over $\{z_{1},z_{2},z_{3}\}$ are the same. In
particular, strategies $x.a\,$\ and $x.b$ are behaviorally equivalent and the
set $\left\{ x.a,x.b\right\} $ gives the structurally reduced strategy $x$,
representing the \textquotedblleft plan\textquotedblright\ of reaching 
$z_{1}$, whereas $y.a$ (resp. $y.b$) corresponds to the plan of reaching $%
z_{2}$ (resp. $z_{3}$). Thus, $G$ and $G^{\prime }$ have isomorphic sets (of
cardinality 3) of structurally reduced strategies, where the isomorphism
preserves the map from reduced strategies to terminal nodes.

\begin{figure}[!htb]
\centering
\begin{minipage}[c]{0.40\textwidth}
\centering
\begin{tikzpicture}
[scale=1,description/.style=auto]
\node (a01) at (0,3){{\textsl{1}}};
\node (a11) at (-2,1.5){{\footnotesize $z_1$}};
\node (a12) at (2,1.5){{\textsl{1}}};
\node (a21) at (0.5,0){{\footnotesize $z_2$}};
\node (a22) at (3.5,0){{\footnotesize $z_3$}};
\draw [-latex,thin](a01)--(a11);
\draw [-latex,thin](a01)--(a12);
\draw [-latex,thin](a12)--(a21);
\draw [-latex,thin](a12)--(a22);
\node (label01) at (-1.3,2.4){{\footnotesize $x$}};
\node (label02) at (1.3,2.4){{\footnotesize $y$}};
\node (label03) at (1.05,.9){{\footnotesize $a$}};
\node (label04) at (2.95,.9){{\footnotesize $b$}};
\end{tikzpicture}
\end{minipage}
\hspace{11mm} 
\begin{minipage}[c]{0.40\textwidth}
\centering
\begin{tikzpicture}
[scale=1,description/.style=auto]
\node (a01) at (0,3){{\textsl{1}}};
\node (a11) at (-2,1.5){{\footnotesize $z_1$}};
\node (a21) at (0.5,0){{\footnotesize $z_2$}};
\node (a22) at (3.5,0){{\footnotesize $z_3$}};
\draw [-latex,thin](a01)--(a11);
\draw [-latex,thin](a01)--(a21);
\draw [-latex,thin](a01)--(a22);
\node (label01) at (-1.3,2.4){{\footnotesize $x$}};
\node (label03) at (0,1.5){{\footnotesize $a$}};
\node (label04) at (2.15,1.5){{\footnotesize $b$}};
\end{tikzpicture}
\end{minipage}
\par
\vspace{2mm}
\caption{Example of Coalescing Moves.}
\label{fig:coa1}
\end{figure}
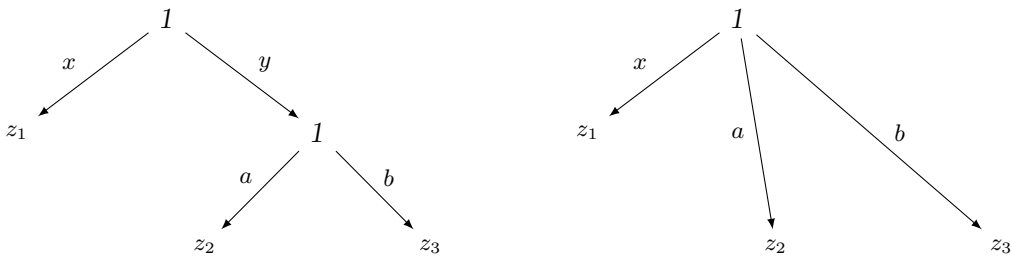
\end{example}

\begin{example}%
\label{ex:splittingxample} The two game structures in Figure \ref{fig:infl}
feature imperfect information. In both cases players \textsl{1} and \textsl{2%
} move sequentially and the second mover cannot observe the move of the
first, which is represented by joining with a dashed line the nodes where the
second mover is active, but the order of moves is interchanged.

\begin{figure}[!htb]
\centering
\begin{minipage}[c]{0.451\textwidth}
\centering
\begin{tikzpicture}
[scale=1,description/.style=auto]
\node (a01) at (0,3){{\textsl{1}}}; 
\node (a11) at (-2+.5,1.5){{\textsl{2}}};
\node (a12) at (2-.5,1.5){{\textsl{2}}};
\node (a21) at (1.5-.75,0){{\footnotesize $z_3$}};
\node (a22) at (1.5+.75,0){{\footnotesize $z_4$}};
\node (a31) at (-1.5-.75,0){{\footnotesize $z_1$}};
\node (a32) at (-1.5+.75,0){{\footnotesize $z_2$}};
\draw [-latex,thin](a01)--(a11);
\draw [-latex,thin](a01)--(a12);
\draw [-latex,thin](a12)--(a21);
\draw [-latex,thin](a12)--(a22);
\draw [-latex,thin](a11)--(a31);
\draw [-latex,thin](a11)--(a32);
\draw [dashed,thin](a11)--(a12);
\node (label01) at (-1.3+.3+.7-.65,2.4){{\footnotesize $x$}};
\node (label02) at (.9,2.4){{\footnotesize $y$}};
\node (label03) at (.9,.9){{\footnotesize $a$}};
\node (label04) at (2.1,.9){{\footnotesize $b$}};
\node (label05) at (.9-3,.9){{\footnotesize $a$}};
\node (label06) at (2.1-3,.9){{\footnotesize $b$}};
\end{tikzpicture}
\end{minipage}
\hspace{10mm} 
\begin{minipage}[c]{0.45\textwidth}
\centering
\begin{tikzpicture}
[scale=1,description/.style=auto]
\node (a01) at (0,3){{\textsl{2}}};
\node (a11) at (-2+.5,1.5){{\textsl{1}}};
\node (a12) at (2-.5,1.5){{\textsl{1}}};
\node (a21) at (1.5-.75,0){{\footnotesize $z_2$}};
\node (a22) at (1.5+.75,0){{\footnotesize $z_4$}};
\node (a31) at (-1.5-.75,0){{\footnotesize $z_1$}};
\node (a32) at (-1.5+.75,0){{\footnotesize $z_3$}};
\draw [-latex,thin](a01)--(a11);
\draw [-latex,thin](a01)--(a12);
\draw [-latex,thin](a12)--(a21);
\draw [-latex,thin](a12)--(a22);
\draw [-latex,thin](a11)--(a31);
\draw [-latex,thin](a11)--(a32);
\draw [dashed,thin](a11)--(a12);
\node (label01) at (-1.3+.3+.7-.65,2.4){{\footnotesize $a$}};
\node (label02) at (.9,2.4){{\footnotesize $b$}};
\node (label03) at (.9,.9){{\footnotesize $x$}};
\node (label04) at (2.1,.9){{\footnotesize $y$}};
\node (label05) at (.9-3,.9){{\footnotesize $x$}};
\node (label06) at (2.1-3,.9){{\footnotesize $y$}};
\end{tikzpicture}
\end{minipage}
\par
\vspace{2mm}
\par
\caption{Example of Interchanging/Simultanizing.}
\label{fig:infl}
\end{figure}
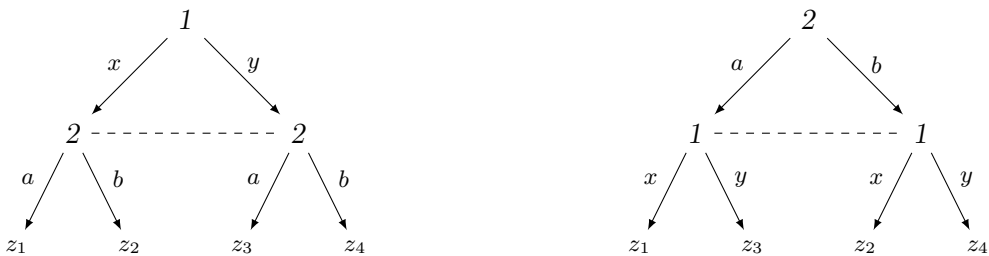

It has been argued that these two structures represent \textquotedblleft
essentially\textquotedblright\ the same situation. 
\end{example}


\subsection{Literature}

\label{subsec:review} The standard literature on game equivalence aims at
characterizing classes of games with the same reduced normal form by means
of invariant transformations. The first work on game equivalence is Thompson 
\cite{Thompson}, who defines four transformations commonly known as
\textquotedblleft Interchanging of Simultaneous Moves,\textquotedblright\
\textquotedblleft Coalescing Moves/Sequential Agent
Splitting,\textquotedblright\ \textquotedblleft Addition of a Superfluous
Move,\textquotedblright\ and \textquotedblleft Inflation/Deflation.\textquotedblright\ Relying on the simplification of Krentel,
McKinsey, and Quine \cite{Krentel} and the extensive-game model proposed by Kuhn 
\cite{Kuhn}, he shows that, up to relabelings, two finite games
share the same reduced normal form if and only if each extensive form
representation can be transformed into the other through a finite number of
applications of these transformations (see also \cite{Dalkey}).

A few later contributions extended Thompson's work. 
In particular, Kohlberg and Mertens \cite{KohlbergMertens} extend the above
result to games with chance moves, proposing two additional transformations
which are, essentially, modified versions of Coalescing Moves and Addition
of Superfluous Moves for the chance player. They argue that all the
\textquotedblleft strategic features\textquotedblright\ are unchanged
through the application of these transformations. 
Elmes and Reny \cite{ElmesReny} point out that one of these transformations,
Inflation/Deflation, does not preserve the perfect recall property. Hence,
given a modified version of the Addition of a Superfluous Move
transformation, they show that two extensive form games with the same
reduced normal form can be transformed into each other without using 
the unwanted transformation, and preserving the perfect recall property. This
is, in our view, conceptually appealing, because whether players have
perfect or imperfect recall should be part of the description of their
personal traits (like their subjective preferences), not of the objective rules of
the game. Such rules should describe the objective \emph{\textquotedblleft
flow\textquotedblright } of information to players. 
Assuming that players are necessarily informed of the actions they just
chose, the objectively accumulated \emph{\textquotedblleft
stock\textquotedblright } of information is represented by information
partitions with the perfect recall property, see also Section \ref{subsec:perectrecall}.   
Other notions of game equivalence have been studied in the literature. For example, Hoshi and Isaac \cite{HoshiIsaac} extend Thompson's result to the case of games with unawareness. 
Bonanno \cite{Bonanno} investigates the notion of game equivalence which arises from the iterated application of Interchanging of Simultaneous Moves only, showing that the resulting equivalence classes correspond to set-theoretic reductions. 
Goranko \cite{MR2037178} takes a linguistic approach, gives an axiomatization of the algebra of games and defines canonical forms so that every game term is provably equivalent to a minimal canonical one. 
Lastly, van Benthem, Bezhanishvili, and Enqvist \cite{vanBenthem2018} look at equivalence from the perspective of the players’ power for controlling outcomes.

Compared to most of this literature, we differ in two ways. First, we consider a different set of transformations, which just apply to the extensive form structure, irrespective of the outcome function.\footnote{In this respect, our work is similar to Bonanno \cite{Bonanno}.} 
Second, our formal language is somewhat different. In particular, for us actions, rather than nodes, are the primitive terms; furthermore, we provide a direct representation of simultaneity. The latter is in principle important, because two games with different sequences of \textquotedblleft essentially simultaneous\textquotedblright\, moves are not necessarily viewed as equivalent by the players. For example, only the first mover can be afraid (or hopeful) of being spied on. More generally, we can cleanly represent incomplete information about the order of moves and the information structure as in \cite{Penta2019}, whereas the representation à la Kuhn \cite{MR0054924} would be cumbersome.


\section{Preliminaries on Game structures}\label{sec:framework}

\subsection{Preliminary Notation}\label{sec:notation}
Let $(X,\le)$ be a finite partially ordered set, i.e., a nonempty finite set $X$ with a binary relation $\le$ contained in $X\times X$ which is transitive, reflexive, and antisymmetric. As usual, given $x,y \in X$, we write $x<y$ as an abbreviation of $x\le y$ and $x\neq y$. 
We let $[x,y]$ represent the order interval $\{v \in X: x\le v\le y\}$ (hence $[x,y]\neq \emptyset$ if and only if $x\le y$) and denote the immediate predecessor relation by $\ll$, that is, $x\ll y$ with $x \neq y$ if and only if $[x,y]=\{x,y\}$. 
A finite partially ordered set $(X,\le)$ is said to be a \textbf{tree} whenever there exists a (necessarily unique) minimum element $e$, which is called the root of the tree, and for each $x \in X$ the order interval $[e,x]$ is totally ordered. 

Given a nonempty set $S$, we denote by $2^S$ the power set of $S$ and by $(\bm{\mathscr{P}}\text{{\scriptsize art}}(S), \le)$ the collection of partitions of $S$. 
The latter is partially ordered by the refinement relation, that is, $\mathscr{P} \le \mathscr{P}^\prime$ for some partitions $\mathscr{P},\mathscr{P}^\prime$ of $S$ 
whenever, for each $P \in \mathscr{P}$, there exists a sub-collection $\{P_i^\prime: i \in I\}\subseteq \mathscr{P}^\prime$ such that $P=\bigcup_{i \in I}P_i^\prime$. 
In addition, we represent the set of finite sequences of elements from $S$ by 
$$
\textstyle S^{<\mathbb{N}_0}:=\bigcup_{n \in \mathbb{N}_0}S^n,
$$
where $S^0:=\{\varnothing\}$ is the singleton containing the empty sequence (here $\mathbb{N}_0$ and $\mathbb{N}$ stand, respectively, for the set of nonnegative integers and positive integers; in particular $0 \in \mathbb{N}_0$).  
Accordingly, we define a partial order $\preceq$ on $S^{< \mathbb{N}_0}$ such that $x\preceq y$ if and only if $x=(x_1,\ldots,x_n) \in S^n$ is a prefix of $y=(y_1,\ldots,y_m) \in S^m$, that is, if and only if $x=\varnothing$ or $0<n\le m$ and $x_i=y_i$ for all positive integers $i\le n$. 
Note that $\varnothing \preceq x$ for all $x \in S^{< \mathbb{N}_0}$, every nonempty order interval $[x,y]$ is finite, hence totally ordered, and the set $\{v: x\prec v \preceq y\}$ has a minimum provided that $x\prec y$; cf. \cite[pp. 144-145]{MR3497869}. 
Thus $(S^{< \mathbb{N}_0}, \preceq)$ is a tree. 
Here, the immediate predecessor relation is denoted by $\prec \hspace{-1.5mm}\prec$.

Recall that a complete lattice is a partially ordered set such that all subsets have both a supremum and an infimum. In particular, a complete lattice admits a greatest and a least element.  Accordingly, we have the following result (we omit details).
\begin{lemma}\label{partitions}
Fix a finite set $X$. Then the partially ordered set $(\bm{\mathscr{P}}\text{{\scriptsize art}}(X), \le)$ is a complete lattice. 
In particular, for each nonempty collection $\{\mathscr{P}_j: j \in J\}$ of partitions of $X$, $\sup\left\{\mathscr{P}_j: j \in J\right\}$ exists in $\bm{\mathscr{P}}\text{{\scriptsize art}}(X)$.
\end{lemma}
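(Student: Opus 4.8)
The plan is to reduce the claim to the construction of suprema, using the elementary lattice-theoretic fact that a poset in which every subset (including the empty one) admits a supremum is automatically a complete lattice: the infimum of a subset $A$ is then recovered as the supremum of the set $\{\mathscr{Q}: \mathscr{Q}\le \mathscr{P}\text{ for all }\mathscr{P}\in A\}$ of lower bounds of $A$, which again has a supremum by hypothesis. So it suffices to exhibit all suprema. Throughout one must keep in mind the convention fixed above, under which the \emph{coarser} partition is the \emph{smaller} one: $\mathscr{P}\le \mathscr{P}^\prime$ means every block of $\mathscr{P}$ is a union of blocks of $\mathscr{P}^\prime$, so the finest partition $\{\{x\}: x\in X\}$ is the greatest element and the trivial partition $\{X\}$ is the least.

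First I would dispatch the empty family, whose supremum is by definition the least element, which is $\{X\}$ as just noted. For a nonempty family $\{\mathscr{P}_j: j\in J\}$ I would write down the candidate supremum explicitly as the common refinement $\mathscr{R}$, whose blocks are the nonempty sets of the form $\bigcap_{j\in J}P_j$ with $P_j\in \mathscr{P}_j$; equivalently, $x$ and $y$ lie in a common block of $\mathscr{R}$ iff they lie in a common block of $\mathscr{P}_j$ for every $j$. Two verifications remain. That $\mathscr{R}$ is an upper bound: each block of $\mathscr{R}$ is contained in a block of $\mathscr{P}_j$, which says precisely $\mathscr{P}_j\le \mathscr{R}$. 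That it is the \emph{least} upper bound: if $\mathscr{S}\ge \mathscr{P}_j$ for all $j$, then every block of $\mathscr{S}$ is contained in a block of each $\mathscr{P}_j$, hence in their intersection, hence in a block of $\mathscr{R}$, which gives $\mathscr{R}\le \mathscr{S}$. Both are short bookkeeping arguments.

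Conceptually the cleanest packaging, which I would at least mention, is via equivalence relations: the map $\mathscr{P}\mapsto E_{\mathscr{P}}:=\{(x,y)\in X\times X: x,y\text{ share a block of }\mathscr{P}\}$ is an \emph{order-reversing} bijection from $(\bm{\mathscr{P}}\text{{\scriptsize art}}(X),\le)$ onto the set of equivalence relations on $X$ ordered by inclusion, and the latter is manifestly a complete lattice because an arbitrary intersection of equivalence relations is again one. Transporting infima of equivalence relations (intersections) back through this antitone bijection yields exactly the suprema $\sup\{\mathscr{P}_j\}$ built above, and dually produces the infima. The only genuinely delicate point in the whole argument is orientational: because $\le$ ranks coarser partitions below finer ones, partition-suprema correspond to equivalence-relation-\emph{infima}, and it is easy to invert a direction by accident. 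Once the order is pinned down the rest is routine, which is no doubt why the statement is offered with the remark that details are omitted.
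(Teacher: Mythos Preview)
Your argument is correct: the common-refinement construction gives the supremum under the stated ordering convention, and the reduction to equivalence relations is a clean alternative packaging. The paper itself omits the proof of this standard lemma entirely, so there is nothing to compare against; your write-up is exactly the routine verification the authors allude to when they write ``we omit details.''
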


Lastly, given nonempty sets $X,Y,Z$, a correspondence $\varphi: X \twoheadrightarrow Y$ is a map that associates each $x \in X$ with some \emph{possibly empty} subset of $Y$. 
With this, for each $y\in Y$, 
we define the inverse image of $y$ as 
$\varphi^{-1}(y):=\{x \in X: y\in \varphi(x)\}$ 
(hence, $\varphi^{-1}$ is a correspondence from $Y$ to $X$). 
Given another correspondence $\psi: Y \twoheadrightarrow Z$, the composition $\psi\circ \varphi: X\twoheadrightarrow Z$ is the correspondence defined by $(\psi \circ \varphi)(x):=\bigcup_{y \in \varphi(x)}{\psi(y)}$ for all $x \in X$. 
Note that correspondences can be equivalently reinterpreted as binary relations (see also the comments in Section \ref{sec:mainth}).


\subsection{Extensive game structures} Let $\mathcal{G}$ be the collection of all extensive game structures $G$ with imperfect information and perfect recall represented by tuples
\begin{equation}\label{eq:extensive}
\langle I, \bar{H}, (A_i, \emph{\textbf{H}}_i)_{i \in I} \rangle,
\end{equation}
where each primitive component will be described below (this is inspired by \cite{MR1301776}). 
For brevity, we will often write \textquotedblleft game\textquotedblright\, instead of 	\textquotedblleft game structure.\textquotedblright\,  
\emph{Hereafter, all extensive game structures are assumed to be finite and without chance moves.} 

\subsubsection{Players} 
$I$ is a nonempty finite \textbf{set of players}. 

\subsubsection{Actions} 
For each $i \in I$, $A_i$ is a nonempty finite set of potentially \textbf{feasible actions} which can be chosen by player $i$. 
Then, we let
$$
A:=\bigcup_{\emptyset\neq J\subseteq I}\left(\prod_{i \in J}A_i\right)
$$
denote the set of action profiles of subsets of players. 

\subsubsection{Histories} 
$\bar{H}$ is the finite \textbf{set of histories}, i.e., a finite tree contained in $(A^{< \mathbb{N}_0},\preceq)$ such that $\varnothing \in \bar{H}$ and closed under the immediate predecessor relation, that is, if $x\prec \hspace{-1.5mm}\prec y$ in $(A^{< \mathbb{N}_0},\preceq)$ and $y \in \bar{H}$, then $x \in \bar{H}$. We refer to elements of $\bar{H}$ as \textquotedblleft histories\textquotedblright\, or \textquotedblleft nodes\textquotedblright\,as convenient. (Hence, each node in $\bar{H}$ is a chain of profile of actions in $A$.) 
With this, $\bar{H}$ can be partitioned in the sets of terminal histories $Z$ and non-terminal histories $H:=\bar{H}\setminus Z$. 

\medskip


From the previous elements, we derive an \textbf{active-player correspondence} $I(\cdot): H \twoheadrightarrow I$ that associates with each $h \in H$ the \emph{nonempty} subset $I(h) \subseteq I$ such that
$$
\forall a \in A,\quad (h,a) \in \bar{H} \quad \implies \quad a \in \prod_{i \in I(h)}A_i.
$$
In addition, we assume that, for each $i \in I$, there is some $h \in H$ such that $i \in I(h)$, that is,
$$
H_i:=\left\{h \in H: i \in I(h)\right\}\neq \emptyset.
$$

\subsubsection{Information structure} For each player $i \in I$, $\emph{\textbf{H}}_i$ is a partition of $H_i$, called the \textbf{information partition} of $i$.

\medskip

Moreover, we assume that the profile $(F_{i}:H\twoheadrightarrow A_{i})_{i\in I}$ of \textbf{feasibility
correspondences} defined, for each $i \in I$, by
$$
\textstyle \forall h \in H,\quad F_i(h):=\mathrm{proj}_{A_{i}}\left\{ a\in \prod_{j\in I\left(h\right) }A_{j}:\left( h,a\right) \in \bar{H}\,\right\} 
$$
satisfies the following properties:
\begin{list}{$\diamond$}{}
\item \label{oldassumptionA} For each player $i \in I$ and each non-terminal history $h \in H$, if $F_i(h)\neq \emptyset$, then $|F_i(h)|\ge 2$; in the latter case, we say that $i$ is active, otherwise he\footnote{The male pronouns (he, him, his) are used throughout this paper. We hope this won't be interpreted by anyone as an attempt to exclude females from the game or to imply their exclusion. Centuries of use have made these pronouns neutral, and we feel their use provides for clear and concise written text.} is inactive.
\item\label{item:G6}
For each non-terminal history $h \in H$, the feasibility conditions of distinct players are logically independent, that is, 
$$
\textstyle \forall a \in \prod_{i \in I(h)}A_i,\quad h\prec \hspace{-1.5mm}\prec (h,a)\,\,\,\Longleftrightarrow\,\,\, a\in \prod_{i \in I(h)}F_i(h).
$$
Hence, we write $(h,a):=(a_1,\ldots,a_n,a)$ whenever $h=(a_1,\ldots,a_n) \in H$; in particular, $(h,a)=(a)$ if $h=\varnothing$.
\item For each player $i \in I$, $F_i$ is $\emph{\textbf{H}}_i$-measurable, that is, 
$$
\textstyle \forall \emph{\textbf{h}}_i \in \emph{\textbf{H}}_i, \forall h,h^\prime \in \emph{\textbf{h}}_i,\quad F_i(h)=F_i(h^\prime). 
$$
Each element $\emph{\textbf{h}}_i \in \emph{\textbf{H}}_i$ is interpreted as an \textbf{information set} of player $i$, i.e., 
a maximal subset of non-terminal histories that player $i$ is not able to distinguish. 
With this, it makes sense to write $F_i(\emph{\textbf{d}}_i)$ for each nonempty subset $\emph{\textbf{d}}_i\subseteq \emph{\textbf{h}}_i$. 
Note that, at an information set, player $i$ might not know who his active 
opponents are and their feasible actions.
\item For notational convenience, we assume that different information sets are associated with disjoint sets of feasible actions, that is, 
\begin{equation}\label{eq:disjointinfosets}
\forall \emph{\textbf{h}}_i, \emph{\textbf{h}}_i^\prime \in \emph{\textbf{H}}_i, \quad  \emph{\textbf{h}}_i\neq \emph{\textbf{h}}_i^\prime \,\,\implies\,\, F_i(\emph{\textbf{h}}_i)\cap F_i(\emph{\textbf{h}}_i^\prime)=\emptyset. 
\end{equation}
\end{list}

%

Lastly, for each history $h \in \bar{H}$, let $Z(h):=\{z \in Z: h\preceq z\}$ be the set of terminal histories which follow $h$  and, for each nonempty $U\subseteq \bar{H}$, write $Z(U):=\bigcup_{h \in U}Z(h)$. Also, for each information set $\emph{\textbf{h}}_i \in \emph{\textbf{H}}_i$ and feasible action $a_i^\star \in F_i(\emph{\textbf{h}}_i)$, define the set of terminal histories which follow $\emph{\textbf{h}}_i$ and $a_i^\star$ as
$$
Z(\emph{\textbf{h}}_i,a_i^\star):=\bigcup_{h \in \emph{\textbf{h}}_i} \bigcup_{a_{-i} \in F_{-i}(h)} Z((h,(a_i^\star,a_{-i}))).
$$

The following example illustrates and explains our formalism. 

\begin{example}
Consider the extensive game structure represented in Figure \ref{fig:examplegamestructure}.


\begin{figure}[!htbp]
\centering
\begin{tikzpicture}
[xscale=1.2,description/.style=auto]
\node (00) at (0,0){\textsl{1,2}};
\node (22) at (2,2){{\small $z_2$}};
\node (2-2) at (2,-2){{\small $z_3$}};
\node (-22) at (-2,2){{\small $z_1$}};
\node (-2-2) at (-2,-2){\textsl{3}};
\node (a) at (-4,-4){{\small $z_4$}};
\node (b) at (-1.5,-4){{\small $z_5$}};
\draw [-latex,thin] (00)--(22);
\draw [-latex,thin] (00)--(2-2);
\draw [-latex,thin] (00)--(-22);
\draw [-latex,thin] (00)--(-2-2);
\draw [-latex,thin] (-2-2)--(a);
\draw [-latex,thin] (-2-2)--(b);
\node (22_label) at (1.6,1.1){{\small $(u,r)$}};
\node (2-2_label) at (1.6-.1,-1){{\small $(d,r)$}};
\node (-22_label) at (-1.6,1.1){{\small $(u,\ell)$}};
\node (-2-2_label) at (-1.6+.1,-1){{\small $(d,\ell)$}};
\node (a_label) at (-3.5+.08,-3.1){{\small $x$}};
\node (b_label) at (-1.5,-3.1){{\small $y$}};
\end{tikzpicture}
\caption{An example of extensive game structure.}
\label{fig:examplegamestructure}
\end{figure}
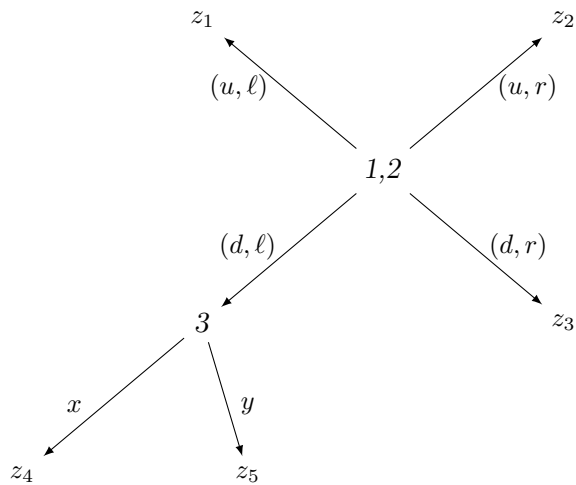


Here, the set of players is $I=\{1,2,3\}$ and the set of feasible actions are $A_1=\{u,d\}$, $A_2=\{\ell,r\}$, and $A_3=\{x,y\}$. The set of histories $\bar{H}$ is partitioned into the set of terminal histories $Z=\{((u,\ell)), ((u,r)), ((d,r)), ((d,\ell),x), ((d,\ell),y)\}$ and non-terminal histories $H=\{\varnothing, h\}$, where $h:=((d,\ell))$. In addition, the feasibility correspondences satisfy $F_i(\varnothing)=A_i$ and $F_i(h)=\emptyset$ for each $i \in \{1,2\}$, $F_3(\varnothing)=\emptyset$, and $F_3(h)=A_3$. Lastly, all informations sets are singletons, so that $\emph{\textbf{H}}_i=\{\{\varnothing\}\}$ for each $i \in \{1,2\}$, and $\emph{\textbf{H}}_3=\{\{h\}\}$. Despite this, the game does not feature perfect information in the usual sense because it contains simultaneous moves by player \textsl{1} and \textsl{2} at the root.
\end{example}

\subsection{Perfect recall}\label{subsec:perectrecall} Game structures $G \in \mathcal{G}$ have to satisfy perfect recall, see \cite[Definition 6.5]{MR3497869}. This means that each player always remembers everything he knew and did earlier. 
Thus information sets represent the \textquotedblleft stock\textquotedblright\, of information objectively given to a player irrespective of his cognitive abilities, which are personal features and not part of the rules of the game.\footnote{We could instead describe explicitly the information flows implied by the rules of the game (as, for example, in Myerson \cite{MR832762}), but this would make notation heavier and proofs more convoluted.} 
In particular, perfect recall implies that, in each information set, two distinct histories are unrelated, i.e.,
\begin{equation}\label{eq:absmind}
\forall i \in I, \forall \emph{\textbf{h}}_i \in \emph{\textbf{H}}_i, \forall h,h^\prime \in \emph{\textbf{h}}_i,\,\,\,\,\,\,h\preceq h^\prime \,\,\,\implies\,\,\, h=h^\prime. 
\end{equation}
The violation of \eqref{eq:absmind} is commonly known as ``absent-mindedness'' (see \cite{MR1467534}). 


\subsection{Z-reduced normal form}\label{subsec:zred} For each player $i \in I$, let $S_i$ be his set of strategies, that is, 
$$
S_i:=\prod_{\emph{\textbf{h}}_i \in \emph{\textbf{H}}_i}F_i(\emph{\textbf{h}}_i).
$$
Similarly, 
$
S:=\prod_{i \in I}S_i 
$ 
denotes the set of strategy profiles. 
Each strategy profile $s \in S$ determines a unique terminal history $z \in Z$.\footnote{Players who are active at $\varnothing$ determine a unique history $h_1 \in \overline{H}$ of length $1$; if $h_1 \in Z$ we are done, otherwise players who are active at $h_1$ determine a unique history $h_2 \in \overline{H}$ of length $2$; this is repeated only a finite number of times, hence the algorithm terminates.} We denote this path function by 
$$
\zeta: S \to Z.
$$

Accordingly, for each game structure $G =\langle I, \bar{H}, (A_i, \emph{\textbf{H}}_i)_{i \in I} \rangle \in \mathcal{G}$, we define its $Z$\textbf{-normal form} by
\begin{displaymath}
\mathrm{n}_Z(G):=\langle I, (S_i)_{i \in I}, Z, \zeta \rangle.
\end{displaymath}
Note that the $Z$-normal form is not graphical per se, but rather represents the game $G$ by means of a kind of matrix.

\begin{defi}\label{defi:strategyequivalent}
Fix $G \in \mathcal{G}$ as in \eqref{eq:extensive} and a player $i \in I$. 
Two strategies $s_i,s_i^\prime \in S_i$ are \textbf{behaviorally equivalent}, written as $s_i \overset{i}{\sim} s_i^\prime$, if 
\begin{displaymath}
\forall s_{-i} \in S_{-i}, \,\,\,\,\zeta(s_i,s_{-i})=\zeta(s_i^\prime,s_{-i}),
\end{displaymath}
where $S_{-i}:=\prod_{j \in I\setminus \{i\}}S_j$. 
\end{defi}
It is worth noting that this is the classical definition of realization-equivalent strategies. However, as proved by Kuhn in \cite[Theorem 1]{MR0054924}, this coincides with the condition that strategies $s_i,s_i^\prime \in S_i$ do not prevent from being reached the same collection of information sets in $\emph{\textbf{H}}_i$ and behave in the same way at such information sets.

Since $\overset{i}{\sim}$ is an equivalence relation on $S_i$, we can define the quotient space 
$$\mathcal{S}_i:=S_i/\overset{i}{\sim}.$$
Elements of $\mathcal{S}_i$ are \textbf{structurally reduced strategies} of player $i$, that is, classes of behaviorally equivalent strategies of $i$. 
Similarly, we set $\mathcal{S}:=\prod_{i \in I}\mathcal{S}_i$ and denote the representative of each $s \in S$ by $s^\bullet \in \mathcal{S}$.

For each $G \in \mathcal{G}$,  
we let its $Z$\textbf{-reduced normal form} be
\begin{displaymath}
\mathrm{rn}_Z(G):=\langle I, (\mathcal{S}_i)_{i \in I}, Z, \tilde{\zeta} \rangle,
\end{displaymath}
where $\tilde{\zeta}: \mathcal{S} \to Z$ is the map defined by $\tilde{\zeta}(s^\bullet)=\zeta(s)$.

With this, we have all the ingredients to define the notion of behavioral equivalence for game structures.  
\begin{defi}\label{defi:behavioralequivalence}
Two extensive game structures $G,G^\prime \in \mathcal{G}$ are \textbf{behaviorally equivalent} 
if they share the same $Z$-reduced normal form up to isomorphisms\footnote{More explicitly, there exist a bijection $f_i: \mathcal{S}_i\to \mathcal{S}_i^\prime$ for each $i \in I$ and a bijection $g: Z \to Z^\prime$ such that $g(\tilde{\zeta}(s))=\tilde{\zeta^\prime}(f(s))$ for all $s \in \mathcal{S}$; here $f(s)$ is the structurally reduced strategy profile $(f_i(s_i))_{i \in I} \in \mathcal{S}^\prime$.} (i.e., $\mathrm{rn}_Z(G) \simeq \mathrm{rn}_Z(G^\prime)$).
\end{defi}
In particular, the two game structures in Example \ref{ex:coalescingexample} and the two game structures in Example \ref{ex:splittingxample} are, respectively, behaviorally equivalent.


\section{Invariant Transformations}\label{sec:invariant}

We introduce the notion of invariant transformation:
\begin{defi}\label{def:invarianttransformation}
A correspondence $T: \mathcal{G} \twoheadrightarrow \mathcal{G}$ is said to be an \textbf{invariant transformation} if $\mathrm{dom}(T):=\{G \in \mathcal{G}: T(G)\neq \emptyset\}\neq \emptyset$ and $G$ is behaviorally equivalent to $G^\prime$ for all $G \in \mathrm{dom}(T)$ and $G^\prime \in T(G)$. The family of invariant transformations will be denoted by $\mathcal{T}$. 
\end{defi}



To define the basic invariant transformations, we first need to introduce the notions of \textquotedblleft controlling\textquotedblright\, and \textquotedblleft dominating\textquotedblright\, sets. 
Intuitively, these sets are collections of histories in the same information set of an active player $i$ that, in a sense, control suitable paths which are successors of a given set of histories $h \in H$. 


\subsection{Controlling and dominating sets} 
Fix 
a game structure 
$G \in \mathcal{G}$ as defined in \eqref{eq:extensive}. 

\medskip

\begin{defi}\label{def:controllingset}
Given a player $i \in I$ and two information sets $\emph{\textbf{h}}_i, \emph{\textbf{h}}_i^\prime \in \emph{\textbf{H}}_i$, we say that $\emph{\textbf{h}}_i^\prime$ \textbf{controls} $\emph{\textbf{h}}_i$, written 
$$
\emph{\textbf{h}}_i \ll_i \emph{\textbf{h}}_i^\prime\,,
$$
whenever there exists a feasible action $a_i^\star \in F_i(\emph{\textbf{h}}_i)$ such that 
\begin{equation}\label{eq:controlling}
Z(\emph{\textbf{h}}_i,a_i^\star)=Z(\emph{\textbf{h}}_i^\prime).
\end{equation}
\end{defi}
In other words, given perfect recall, $\emph{\textbf{h}}_i \ll_i \emph{\textbf{h}}_i^\prime$ 
means that $\emph{\textbf{h}}_i^\prime$ follows $\emph{\textbf{h}}_i$ in the directed forest of information sets of player $i$,\footnote{That is, $\emph{\textbf{h}}_i$ precedes $\emph{\textbf{h}}_i^\prime$ in the directed forest if, for each $h^\prime \in \emph{\textbf{h}}_i^\prime$ there is $h \in \emph{\textbf{h}}_i$ such that $h \prec h^\prime$.} and the only thing $i$ learns moving from $\emph{\textbf{h}}_i$ to $\emph{\textbf{h}}_i^\prime$ is that he chose at $\emph{\textbf{h}}_i$ the unique action leading to $\emph{\textbf{h}}_i^\prime$ (hence, $\emph{\textbf{h}}_i^\prime$ is an immediate follower of $\emph{\textbf{h}}_i$ in the directed forest, given that player $i$ has at least two actions at each information set).

\bigskip

For example, the bottom information set of player \textsl{2} in the left hand side of Figure \ref{fig:coalescinggeneral} controls the top information set.

\begin{remark}\label{remark:controllingset}
Fix $i \in I$. Given an information set $\emph{\textbf{h}}_i \in \emph{\textbf{H}}_i$ and a feasible action $a_i^\star \in F_i(\emph{\textbf{h}}_i)$, it is easily seen, due to perfect recall, that there is at most one $\emph{\textbf{h}}_i^\prime \in \emph{\textbf{H}}_i$ following action $a_i^\star$ such that $\emph{\textbf{h}}_i \ll_i \emph{\textbf{h}}_i^\prime$. 
\end{remark}

\medskip

\begin{defi}\label{def:dominatingset}
Given a player $i \in I$, a non-terminal history $h \in H$ with $i\notin I(h)$, and a nonempty subset $\emph{\textbf{d}}_i$ of some information set $\emph{\textbf{h}}_i \in \emph{\textbf{H}}_i$, 
we say that $\emph{\textbf{d}}_i$ \textbf{dominates} $h$, written 
$$
h \lessdot_i \emph{\textbf{d}}_i,
$$
whenever
$$
Z(h)=Z(\emph{\textbf{d}}_i).
$$
\end{defi}
In other words, $h$ preceeds $\emph{\textbf{d}}_i$ and they have the same set of terminal successors. 
For example, the two nodes on the right of the information set of player \textsl{3} in the first game tree of Figure  \ref{fig:interchanginggeneral} dominate history $(r)$, 
i.e., $(r) \lessdot_{\textsl{3}} \{(r,a),(r,b)\}$.

\begin{remark}\label{remark:dominatingset} 
In the same spirit of Remark \ref{remark:controllingset}, fix $h \in H$ and $i \in I$ such that $i$ is inactive at $h$. Then it is easily seen that there is at most one nonempty set $\emph{\textbf{d}}_i \subseteq \emph{\textbf{h}}_i \in \emph{\textbf{H}}_i$ such that $h \lessdot_i \emph{\textbf{d}}_i$. 
\end{remark}

\medskip

With these notions at hand, we can define two invariant transformations, which are the natural generalizations of the ones provided in Examples \ref{ex:coalescingexample} and \ref{ex:splittingxample}, respectively.


\subsection{Coalescing Moves}\label{subsec:gamma} We represent the so-called \textbf{Coalescing Moves} transformation (or, simply, \textbf{Coalescing}) by the correspondence
$$
\gamma: \mathcal{G}  \twoheadrightarrow \mathcal{G},
$$
and recall that its inverse correspondence is commonly known in the literature as \textbf{Sequential Agent Splitting}.
The domain $\mathrm{dom}(\gamma):=\{G \in \mathcal{G}: \gamma(G) \neq \emptyset\}$ is the collection of all extensive game structures $G$ defined by \eqref{eq:extensive} for which there exist $i \in I$ and information sets $\emph{\textbf{h}}_i,\emph{\textbf{h}}_i^\prime \in \emph{\textbf{H}}_i$ such that 
$\emph{\textbf{h}}_i^\prime$ controls $\emph{\textbf{h}}_i$, that is, 
$\emph{\textbf{h}}_i \ll_i \emph{\textbf{h}}_i^\prime$. 

For each $G \in \mathrm{dom}(\gamma)$ and $\emph{\textbf{h}}_i,\emph{\textbf{h}}_i^\prime \in \emph{\textbf{H}}_i$ with $\emph{\textbf{h}}_i \ll_i \emph{\textbf{h}}_i^\prime$, we identity the game in $\gamma(G)$ corresponding to the pair $(\emph{\textbf{h}}_i,\emph{\textbf{h}}_i^\prime)$ with 
$$
\gamma(G; \emph{\textbf{h}}_i, \emph{\textbf{h}}_i^\prime).
$$ 

Denoting by $a_i^\star \in F_i(\emph{\textbf{h}}_i)$ the (unique) feasible action of the player $i \in I$ for which \eqref{eq:controlling} holds, the transformed game
$$
\gamma(G; \emph{\textbf{h}}_i, \emph{\textbf{h}}_i^\prime):=\langle \tilde{I}, \tilde{\bar{H}}, (\tilde{A}_i, \tilde{\emph{\textbf{H}}}_i)_{i \in \tilde{I}}\rangle
$$
is defined as follows:
\begin{list}{$\bullet$}{}
\item \label{item:c1} $\tilde{I}=I$;
\item \label{item:c2} $\tilde{\bar{H}}$ coincides with $\bar{H}$ for all histories $h$ such that at least one of the following is satisfied:
\begin{enumerate}[label=(\roman*)]
\item there exists $h^\prime \in \emph{\textbf{h}}_i$ such that $h\preceq h^\prime$;
\item $h$ is unrelated to any $h^\prime \in \emph{\textbf{h}}_i$;
\item there exist $h^\prime \in \emph{\textbf{h}}_i$, $a_i \in F_i(h)\setminus \{a_i^\star\}$, and $a_{-i} \in F_{-i}(h)$ with $(h^\prime,(a_i,a_{-i}))\preceq h$.
\end{enumerate}
In the remaining cases, each history $h^\prime \in \bar{H}$ such that $(h,(a_i^\star,a_{-i})) \preceq h^\prime$, for some $a_{-i} \in F_{-i}(h)$, has to be replaced in $\tilde{\bar{H}}$ by $\tilde{h^\prime}$ where the actions chosen by player $i$ at the histories in the information set $\emph{\textbf{h}}_i^\prime$ are shifted back replacing $a_i^\star$; 
\item \label{item:c2} $\tilde{A}_j=A_j$ for all $j \in I$;
\item \label{item:c4} denoting with $\tilde{h}$ the history in $\gamma(G; \emph{\textbf{h}}_i, \emph{\textbf{h}}_i^\prime)$ corresponding to $h \in H$, we have\footnote{We recall by \eqref{eq:disjointinfosets} that $F_i(\emph{\textbf{h}}_i) \cap F_i(\emph{\textbf{h}}_i^\prime)=\emptyset$.}
$$
\tilde{F}_i(\tilde{h})=F_i(\emph{\textbf{h}}_i) \cup F_i(\emph{\textbf{h}}_i^\prime) \setminus \{a_i^\star\}
$$
for all $h \in \emph{\textbf{h}}_i$; otherwise $\tilde{F}_j(\tilde{h})=F_j(h)$. 
The new information sets $(\tilde{\emph{\textbf{H}}}_j)_{j \in I}$ are modified accordingly. In particular, $\emph{\textbf{h}}_i$ and $\emph{\textbf{h}}_i^\prime$ are coalesced into $\tilde{\emph{\textbf{h}}}_i$. Moreover, new information sets of players $j \in I\setminus \{i\}$ who are active in the sub-trees with roots $h \in \emph{\textbf{h}}_i$ are added, so that they are not able to observe the choices made by player $i$ at histories $\tilde{h} \in \tilde{\emph{\textbf{h}}}_i$ among the feasible actions $F_i(\emph{\textbf{h}}_i^\prime)$; see, e.g., the information sets of player \textsl{3} in Figure \ref{fig:coalescinggeneral} below.
\end{list}

\begin{figure}[!htb]
\centering
\begin{minipage}[c]{0.45\textwidth}
\centering
\begin{tikzpicture}
[scale=.95,description/.style=auto]
\node (a0) at (0,8){{\textsl{1}}};
\node (a11) at (-1.5,6){{\textsl{2}}};
\node (a12) at (1.5,6){{\textsl{2}}};
\node (a21) at (-1.5-.9,4){{\footnotesize $z_1$}};
\node (a22) at (-1.5+.9,4){{\textsl{3}}};
\node (a23) at (1.5-.9,4){{\footnotesize $z_2$}};
\node (a24) at (1.5+.9,4){{\textsl{3}}};
\node (a31) at (-.4-1,2){{\textsl{2}}};
\node (a32) at (-.4+.7,2){{\textsl{2}}};
\node (a33) at (2.4-.7,2){{\textsl{2}}};
\node (a34) at (2.4+1,2){{\textsl{2}}};
\node (a411) at (-1.4-.5,0){{\footnotesize $z_3$}};
\node (a412) at (-1.4+.3,0){{\footnotesize $z_4$}};
\node (a421) at (.3-.5,0){{\footnotesize $z_5$}};
\node (a422) at (.3+.3,0){{\footnotesize $z_6$}};
\node (a431) at (1.7-.3,0){{\footnotesize $z_7$}};
\node (a432) at (1.7+.5,0){{\footnotesize $z_8$}};
\node (a441) at (3.4-.3,0){{\footnotesize $z_9$}};
\node (a442) at (3.4+.5,0){{\footnotesize $z_{10}$}};
\draw [dashed, thin] (a11)--(a12);
\draw [densely dotted] (a31)--(a32);
\draw [densely dotted] (a32)--(a33);
\draw [densely dotted] (a33)--(a34);
\draw [-latex,thin] (a0) to node[left]{{\small $\ell$}}(a11);
\draw [-latex,thin] (a0) to node[right]{{\small $r$}}(a12);
\draw [-latex,thin] (a11) to node[left]{{\small $a$}}(a21);
\draw [-latex,thin] (a11) to node[right]{{\small $b$}}(a22);
\draw [-latex,thin] (a12) to node[left]{{\small $a$}}(a23);
\draw [-latex,thin] (a12) to node[right]{{\small $b$}}(a24);
\draw [-latex,thin] (a22) to node[left]{{\small $x$}}(a31);
\draw [-latex,thin] (a22) to node[right]{{\small $y$}}(a32);
\draw [-latex,thin] (a24) to node[left]{{\small $x$}}(a33);
\draw [-latex,thin] (a24) to node[right]{{\small $y$}}(a34);
\draw [-latex,thin] (a31) to node[left]{{\small $p$}}(a411);
\draw [-latex,thin] (a31) to node[right]{{\small $q$}}(a412);
\draw [-latex,thin] (a32) to node[left]{{\small $p$}}(a421);
\draw [-latex,thin] (a32) to node[right]{{\small $q$}}(a422);
\draw [-latex,thin] (a33) to node[left]{{\small $p$}}(a431);
\draw [-latex,thin] (a33) to node[right]{{\small $q$}}(a432);
\draw [-latex,thin] (a34) to node[left]{{\small $p$}}(a441);
\draw [-latex,thin] (a34) to node[right]{{\small $q$}}(a442);
\end{tikzpicture}
\end{minipage}
$\text{ }\hspace{-5mm}\overset{\gamma}{\implies}$
\begin{minipage}[c]{0.45\textwidth}
\centering
\begin{tikzpicture}
[scale=.95,description/.style=auto]
\node (a0) at (0,8){{\textsl{1}}};
\node (a11) at (-1.5,6){{\textsl{2}}};
\node (a12) at (1.5,6){{\textsl{2}}};
\node (a21) at (-1.5-1.3,4){{\footnotesize $z_1$}};
\node (a22a) at (-1.5+.9-1.4,3){{\textsl{3}}};
\node (a22b) at (-1.5+.9+.2,3){{\textsl{3}}};
\node (a23) at (1.5-1.3,4){{\footnotesize $z_2$}};
\node (a24a) at (1.5+.9-1.4,3){{\textsl{3}}};
\node (a24b) at (1.5+.9+.2,3){{\textsl{3}}};
\node (a311) at (-1.5+.9-1.4-.4,1){{\footnotesize $z_3$}};
\node (a312) at (-1.5+.9-1.4+.4,1){{\footnotesize $z_5$}};
\node (a321) at (-1.5+.9+.2-.4,1){{\footnotesize $z_4$}};
\node (a322) at (-1.5+.9+.2+.4,1){{\footnotesize $z_6$}};
\node (a331) at (1.5+.9-1.4-.4,1){{\footnotesize $z_7$}};
\node (a332) at (1.5+.9-1.4+.4,1){{\footnotesize $z_9$}};
\node (a341) at (1.5+.9+.2-.4,1){{\footnotesize $z_8$}};
\node (a342) at (1.5+.9+.2+.4,1){{\footnotesize $z_{10}$}};
\draw [dashed, thin] (a11.7)--(a12.173);
\draw [densely dotted] (a11.350)--(a12.190);
\draw [dashdotted] (a22a)--(a22b);
\draw [dashdotted] (a24a)--(a24b);
\draw [-latex,thin] (a0) to node[left]{{\small $\ell$}}(a11);
\draw [-latex,thin] (a0) to node[right]{{\small $r$}}(a12);
\draw [-latex,thin] (a11) to node[left]{{\small $a$}}(a21);
\draw [-latex,thin] (a11)--(a22a);
\draw [-latex,thin] (a11) to node[right]{{\small $q$}}(a22b);
\draw [-latex,thin] (a12) to node[left]{{\small $a$}}(a23);
\draw [-latex,thin] (a12)--(a24a);
\node (label) at (1.55-.1,4){{\small $p$}};
\node (label) at (-1.45-.1,4){{\small $p$}};
\draw [-latex,thin] (a12) to node[right]{{\small $q$}}(a24b);
\draw [-latex,thin] (a22a) to node[left]{{\small $x$}}(a311);
\draw [-latex,thin] (a22a) to node[right]{{\small $y$}}(a312);
\draw [-latex,thin] (a22b) to node[left]{{\small $x$}}(a321);
\draw [-latex,thin] (a22b) to node[right]{{\small $y$}}(a322);
\draw [-latex,thin] (a24a) to node[left]{{\small $x$}}(a331);
\draw [-latex,thin] (a24a) to node[right]{{\small $y$}}(a332);
\draw [-latex,thin] (a24b) to node[left]{{\small $x$}}(a341);
\draw [-latex,thin] (a24b) to node[right]{{\small $y$}}(a342);
\end{tikzpicture}
\end{minipage}

\vspace{2mm}

\caption{An example of the transformation ``Coalescing'' $\gamma$.}
\label{fig:coalescinggeneral}
\end{figure}

Intuitively, the Coalescing transformation shifts all the actions in a information set $\emph{\textbf{h}}_i^\prime$ of a given player $i$ backwards to another information set $\emph{\textbf{h}}_i$ of $i$ controlled by the first one. 
Note that the histories in $\tilde{\emph{\textbf{h}}^\prime}_i$ corresponding to $\emph{\textbf{h}}^\prime_i$ may ``disappear'' if player $i$ was the only active player in such histories.
Note also that the equalities between action sets that appear in the definition of Coalescing are best interpreted as isomorphisms, since the meaning of actions such as $p$ and $q$ in Figure
\ref{fig:coalescinggeneral} is changed by the transformation. For example, choosing action $p$ on the right-hand side corresponds to choosing $b$ and $p$ on the left-hand side. 

\begin{lemma}\label{lem:gamma}
Coalescing is an invariant transformation, that is, $\gamma \in \mathcal{T}$.
\end{lemma}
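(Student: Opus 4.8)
The plan is to construct, for every $G \in \mathrm{dom}(\gamma)$ and every $\tilde G = \gamma(G;\emph{\textbf{h}}_i,\emph{\textbf{h}}_i')$, an explicit isomorphism $\mathrm{rn}_Z(G) \simeq \mathrm{rn}_Z(\tilde G)$; together with the trivial observation that $\mathrm{dom}(\gamma) \neq \emptyset$ (the left structure of Figure \ref{fig:coalescinggeneral} is in it), this yields $\gamma \in \mathcal{T}$. So fix $G$, the player $i$, the controlled pair $\emph{\textbf{h}}_i \ll_i \emph{\textbf{h}}_i'$, and the unique action $a_i^\star \in F_i(\emph{\textbf{h}}_i)$ with $Z(\emph{\textbf{h}}_i,a_i^\star) = Z(\emph{\textbf{h}}_i')$. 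Write $\tilde\zeta$ and $\tilde\zeta_\ast$ for the reduced path maps of $G$ and $\tilde G$. I must produce bijections $f_j \colon \mathcal{S}_j \to \tilde{\mathcal{S}}_j$ for each $j \in I$ and a bijection $g \colon Z \to \tilde Z$ with $g \circ \tilde\zeta = \tilde\zeta_\ast \circ f$, where $f = (f_j)_{j \in I}$.

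First I would fix the terminal-node bijection. By the definition of $\tilde{\bar H}$, the coalesced tree arises from $\bar H$ by a relabeling of histories that shifts each action of $F_i(\emph{\textbf{h}}_i')$ back in place of $a_i^\star$; this relabeling is a bijection on histories, and I let $g$ be its restriction to $Z$. Thus $|Z| = |\tilde Z|$ and, crucially, $g$ preserves the list of actions of players other than $i$ occurring along each terminal path.

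Second I would define the strategy bijections. For $j \neq i$, the construction keeps the collection of $j$'s information sets and their feasible-action sets intact: an information set of $j$ lying in a coalesced subtree merely has its nodes duplicated (one copy per action of $F_i(\emph{\textbf{h}}_i')$) and then re-merged, precisely so that $j$ cannot observe $i$'s choice among $F_i(\emph{\textbf{h}}_i')$. Hence $S_j$ and $\tilde S_j$ are canonically identified and this identification descends to a bijection $f_j$. For player $i$ the map is the genuinely new one. Invoking Kuhn's characterization, a reduced strategy $\sigma_i \in \mathcal{S}_i$ is a choice of action at each information set it does not itself avoid; at $\emph{\textbf{h}}_i$ it selects some $c \in F_i(\emph{\textbf{h}}_i)$, and if $c = a_i^\star$ it additionally selects some $c' \in F_i(\emph{\textbf{h}}_i')$, whereas if $c \neq a_i^\star$ the information set $\emph{\textbf{h}}_i'$ is unreachable and left unspecified. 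I set $f_i(\sigma_i)$ to agree with $\sigma_i$ at every information set other than $\tilde{\emph{\textbf{h}}}_i$ and to play $c$ at $\tilde{\emph{\textbf{h}}}_i$ when $c \neq a_i^\star$ and $c'$ when $c = a_i^\star$. By the disjointness \eqref{eq:disjointinfosets} and the identity $\tilde F_i(\tilde{\emph{\textbf{h}}}_i) = F_i(\emph{\textbf{h}}_i) \cup F_i(\emph{\textbf{h}}_i') \setminus \{a_i^\star\}$, the map $f_i$ is well defined on classes and bijective; the count $(|F_i(\emph{\textbf{h}}_i)| - 1) + |F_i(\emph{\textbf{h}}_i')| = |\tilde F_i(\tilde{\emph{\textbf{h}}}_i)|$ records the bijectivity.

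The last and hardest step is commutativity, $g(\tilde\zeta(\sigma)) = \tilde\zeta_\ast(f(\sigma))$ for every reduced profile $\sigma$, which I would verify by tracing the two induced plays node by node and matching them under $g$. Outside the coalesced region the plays visit $g$-corresponding histories by construction of the $f_j$ and of $g$. The delicate region is between $\emph{\textbf{h}}_i$ and $\emph{\textbf{h}}_i'$: in $G$ player $i$ plays $a_i^\star$, then the intervening players $j \neq i$ move, and only afterwards does $i$ choose $c' \in F_i(\emph{\textbf{h}}_i')$, while in $\tilde G$ that choice is already made at $\tilde{\emph{\textbf{h}}}_i$ before those players move. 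Equality of the two plays thus hinges on the fact that the intervening players' information sets were merged exactly so that their prescribed actions cannot depend on $i$'s choice among $F_i(\emph{\textbf{h}}_i')$; this unobservability is what permits swapping the timing of that choice without changing the terminal node, and the controlling identity $Z(\emph{\textbf{h}}_i,a_i^\star) = Z(\emph{\textbf{h}}_i')$ guarantees that both games thereafter reach exactly the same terminal successors. I expect the main obstacle to be exactly this bookkeeping: checking that the info-set merging is neither too coarse nor too fine, so that $\tilde G$ again lies in $\mathcal{G}$ (perfect recall and \eqref{eq:disjointinfosets} survive) and so that the swapped play lands on $g(\tilde\zeta(\sigma))$.
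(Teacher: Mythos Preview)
Your proposal is correct and follows exactly the same approach as the paper: construct bijections $f_j\colon\mathcal{S}_j\to\tilde{\mathcal{S}}_j$ (identity for $j\neq i$, the coalescing map for $i$) and $g\colon Z\to\tilde Z$, then check $g\circ\tilde\zeta=\tilde\zeta_\ast\circ f$. The paper's own proof is in fact a two-line sketch that merely asserts the existence of these bijections without spelling them out, so your version is a strictly more detailed rendering of the same argument.
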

\begin{proof}
Fix $G \in \mathrm{dom}(\gamma)$ and consider the transformed game $\gamma(G; \emph{\textbf{h}}_i, \emph{\textbf{h}}_i^\prime)$. 
For each player $j \in I\setminus \{i\}$, there is an obvious isomorphism between $\tilde{\mathcal{S}}_j$ and $\mathcal{S}_j$. 
Moreover, 
there exist a bijection $f_i: \mathcal{S}_i\to \tilde{\mathcal{S}}_i$ and a bijection $g: Z\to \tilde{Z}$ such that $g(\zeta(s))=\tilde{\zeta}(f(s))$ for all $s \in \mathcal{S}$, where $f(s)$ is the structurally reduced strategy profile $(f_j(s_j))_{j \in J}$ and $f_j(s_j):=s_j$ for all $j \in I\setminus \{i\}$. 
This implies that $\mathrm{rn}_Z(G) \simeq \mathrm{rn}_Z(\gamma(G; \emph{\textbf{h}}_i, \emph{\textbf{h}}_i^\prime))$, i.e., $G$ and $\gamma(G; \emph{\textbf{h}}_i, \emph{\textbf{h}}_i^\prime)$ are behaviorally equivalent.
\end{proof}


\subsection{Interchanging/Simultanizing}\label{subsec:iota} 
We denote the \textbf{Interchanging/Simultanizing} transformation by the correspondence
$$
\sigma: \mathcal{G} \twoheadrightarrow \mathcal{G}.
$$
The domain $\mathrm{dom}(\sigma):=\{G \in \mathcal{G}: \sigma(G) \neq \emptyset\}$ is the collection of all extensive game structures $G$ defined as in \eqref{eq:extensive} for which there exist a history $h \in H$, a player $i \in I$, and an information set $\emph{\textbf{h}}_i \in \emph{\textbf{H}}_i$ with a nonempty subset $\emph{\textbf{d}}_i$ such that 
$h \lessdot_i \emph{\textbf{d}}_i$, that is, $\emph{\textbf{d}}_i$ dominates $h$.

For each $G \in \mathrm{dom}(\sigma)$, $h \in H$ and $\emph{\textbf{d}}_i\subseteq \emph{\textbf{h}}_i \in \emph{\textbf{H}}_i$ with $h \lessdot_i \emph{\textbf{d}}_i$, we denote the game in $\sigma(G)$ corresponding to the pair $(h, \emph{\textbf{d}}_i)$ with $
\sigma(G; h, \emph{\textbf{d}}_i)$, where the transformed game
$$
\sigma(G; h, \emph{\textbf{d}}_i):=\langle \tilde{I}, \tilde{\bar{H}}, (\tilde{A}_i,  \tilde{\emph{\textbf{H}}}_i)_{i \in \tilde{I}}\rangle
$$
is defined as follows:
\begin{list}{$\bullet$}{}
\item \label{item:i1} $\tilde{I}=I$;
\item \label{item:i3} $\tilde{\bar{H}}$ coincides with $\bar{H}$ for all histories $h^\prime$ such that either $h^\prime \preceq h$ or $h^\prime$ is unrelated to $h$. In the remaining cases, each history $h^\prime \in \bar{H}$ such that $h \prec h^\prime$ has to be replaced in $\tilde{\bar{H}}$ by $\tilde{h^\prime}$ where the actions chosen by player $i$ at the histories in $\emph{\textbf{d}}_i$ are shifted back at the last coordinate of $h$ (this is possible since, as it been already observed after Definition \ref{def:dominatingset}, player $i$ has to be inactive at $h$); 
\item \label{item:i2} $\tilde{A}_j=A_j$ for all $j \in I$;
\item \label{item:i4} denoting with $\tilde{h^\prime}$ the history in $\sigma(G; h, \emph{\textbf{d}}_i)$ corresponding to $h^\prime \in H$, we have
$$
\tilde{F}_i(\tilde{h})=F_i(\emph{\textbf{d}}_i),
$$
and $\tilde{F}_j(\tilde{h^\prime})=F_j(h^\prime)$ in all remaining cases. 
The new information sets $(\tilde{\emph{\textbf{H}}}_j)_{j \in I}$ are modified accordingly. The position of the subset $\emph{\textbf{d}}_i \subseteq \emph{\textbf{h}}_i$ is shifted back at the last coordinate of $h$, all the others do not change; see e.g. the information set of player \textsl{3} in Figure \ref{fig:interchanginggeneral} below.
\end{list}

Intuitively, the Interchanging/Simultanizing transformation shifts all the actions in a subset $\emph{\textbf{d}}_i$ of an information set $\emph{\textbf{h}}_i$ of a player $i$ backwards to another history $h$ dominated by $\emph{\textbf{d}}_i$ (where he is not active). In addition, as it is shown in Figure \ref{fig:interchanginggeneral}, $\emph{\textbf{d}}_i$ can be a proper subset of $\emph{\textbf{h}}_i$.

\medskip

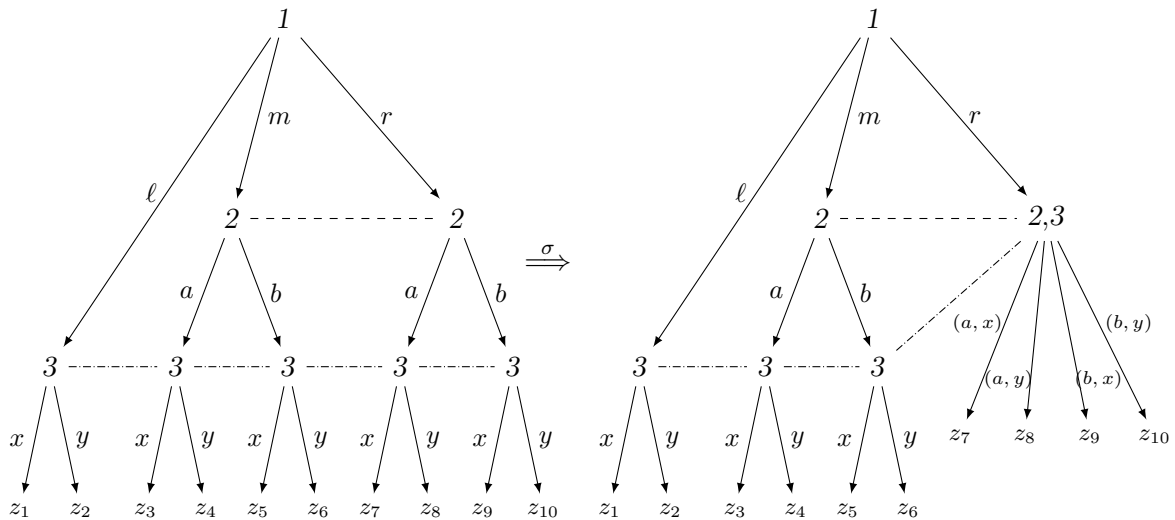
\begin{figure}[!htbp]
\centering
\begin{minipage}[c]{0.45\textwidth}
\centering
\begin{tikzpicture}
[scale=1.15,description/.style=auto]
\node (a0) at (0,8){{\textsl{1}}};
\node (a11) at (-.6,5.7){\textsl{2}};
\node (a12) at (2,5.7){\textsl{2}};
\node (a30) at (-1.5-1.2,4){\textsl{3}};
\draw [dashed] (a11)--(a12);
\draw [-latex,thin] (a0)  to node[right]{{\small $m$}}(a11);
\draw [-latex,thin] (a0)  to node[right]{{\small $r$}}(a12);
\draw [-latex,thin] (a0)  to node[left]{{\small $\ell$}}(a30);
\node (a31) at (-.6-.65,4){\textsl{3}};
\node (a32) at (-.6+.65,4){\textsl{3}};
\node (a33) at (2-.65,4){\textsl{3}};
\node (a34) at (2+.65,4){\textsl{3}};
\draw [densely dashdotted] (a30)--(a31);
\draw [densely dashdotted] (a31)--(a32);
\draw [densely dashdotted] (a32)--(a33);
\draw [densely dashdotted] (a33)--(a34);
\draw [-latex,thin] (a11) to node[left]{{\small $a$}}(a31);
\draw [-latex,thin] (a11) to node[right]{{\small $b$}}(a32);
\draw [-latex,thin] (a12) to node[left]{{\small $a$}}(a33);
\draw [-latex,thin] (a12) to node[right]{{\small $b$}}(a34);
\node (a401) at (-1.5-1.2-.35,2.35){{\footnotesize $z_1$}};
\node (a402) at (-1.5-1.2+.35,2.35){{\footnotesize $z_2$}};
\node (a411) at (-.6-.65-.35,2.35){{\footnotesize $z_3$}};
\node (a412) at (-.6-.65+.35,2.35){{\footnotesize $z_4$}};
\node (a421) at (-.6+.65-.35,2.35){{\footnotesize $z_5$}};
\node (a422) at (-.6+.65+.35,2.35){{\footnotesize $z_6$}};
\node (a431) at (2-.65-.35,2.35){{\footnotesize $z_7$}};
\node (a432) at (2-.65+.35,2.35){{\footnotesize $z_8$}};
\node (a441) at (2+.65-.35,2.35){{\footnotesize $z_9$}};
\node (a442) at (2+.65+.35,2.35){{\footnotesize $z_{10}$}};
\draw [-latex,thin] (a30) to node[left]{{\small $x$}}(a401);
\draw [-latex,thin] (a30) to node[right]{{\small $y$}}(a402);
\draw [-latex,thin] (a31) to node[left]{{\small $x$}}(a411);
\draw [-latex,thin] (a31) to node[right]{{\small $y$}}(a412);
\draw [-latex,thin] (a32) to node[left]{{\small $x$}}(a421);
\draw [-latex,thin] (a32) to node[right]{{\small $y$}}(a422);
\draw [-latex,thin] (a33) to node[left]{{\small $x$}}(a431);
\draw [-latex,thin] (a33) to node[right]{{\small $y$}}(a432);
\draw [-latex,thin] (a34) to node[left]{{\small $x$}}(a441);
\draw [-latex,thin] (a34) to node[right]{{\small $y$}}(a442);
\end{tikzpicture}
\end{minipage}
$\text{ }\hspace{-5mm}\overset{\sigma}{\implies}$
\begin{minipage}[c]{0.45\textwidth}
\centering
\begin{tikzpicture}
[scale=1.15,description/.style=auto]
\node (a0) at (0,8){{\textsl{1}}};
\node (a11) at (-.6,5.7){\textsl{2}};
\node (a12) at (2,5.7){\textsl{2},\textsl{3}};
\node (a30) at (-1.5-1.2,4){\textsl{3}};
\draw [dashed] (a11)--(a12);
\draw [-latex,thin] (a0) to node[right]{{\small $m$}} (a11);
\draw [-latex,thin] (a0) to node[right]{{\small $r$}}(a12);
\draw [-latex,thin] (a0) to node[left]{{\small $\ell$}}(a30);
\node (a31) at (-.6-.65,4){\textsl{3}};
\node (a32) at (-.6+.65,4){\textsl{3}};
\draw [densely dashdotted] (a30)--(a31);
\draw [densely dashdotted] (a31)--(a32);
\draw [densely dashdotted] (a32)--(a12);
\draw [-latex,thin] (a11) to node[left]{{\small $a$}}(a31);
\draw [-latex,thin] (a11) to node[right]{{\small $b$}}(a32);
\node (a401) at (-1.5-1.2-.35,2.35){{\footnotesize $z_1$}};
\node (a402) at (-1.5-1.2+.35,2.35){{\footnotesize $z_2$}};
\node (a411) at (-.6-.65-.35,2.35){{\footnotesize $z_3$}};
\node (a412) at (-.6-.65+.35,2.35){{\footnotesize $z_4$}};
\node (a421) at (-.6+.65-.35,2.35){{\footnotesize $z_5$}};
\node (a422) at (-.6+.65+.35,2.35){{\footnotesize $z_6$}};
\node (a431) at (1,3.2){{\footnotesize $z_7$}};
\node (a432) at (1.35+.4,3.2){{\footnotesize $z_8$}};
\node (a441) at (2.5,3.2){{\footnotesize $z_9$}};
\node (a442) at (2.5+.75,3.2){{\footnotesize $z_{10}$}};
\draw [-latex,thin] (a30) to node[left]{{\small $x$}}(a401);
\draw [-latex,thin] (a30) to node[right]{{\small $y$}}(a402);
\draw [-latex,thin] (a31) to node[left]{{\small $x$}}(a411);
\draw [-latex,thin] (a31) to node[right]{{\small $y$}}(a412);
\draw [-latex,thin] (a32) to node[left]{{\small $x$}}(a421);
\draw [-latex,thin] (a32) to node[right]{{\small $y$}}(a422);
\draw [-latex,thin] (a12)--(a431);
\draw [-latex,thin] (a12)--(a432);
\draw [-latex,thin] (a12)--(a441);
\draw [-latex,thin] (a12)--(a442);
\node (label1) at (1.2, 4.5){{\tiny $(a,x)$}};
\node (label2) at (1.58-.02,4-.15){{\tiny $(a,y)$}};
\node (label3) at (2.56+.04,4-.15){{\tiny $(b,x)$}};
\node (label4) at (2.95,4.5){{\tiny $(b,y)$}};
\end{tikzpicture}
\end{minipage}

\vspace{2mm}

\caption{An example of the transformation ``Interchanging/Simultanizing'' $\sigma$.\label{fig:interchanginggeneral}}
\end{figure}


The extensive game structure in Figure \ref{fig:iotainverse} provides another illustrative example: shifting back the two third-tier nodes of player \textsl{2} we get the extensive game structure in the right hand side of Figure \ref{fig:interchanginggeneral}.

\medskip

\begin{figure}[!htbp]
\centering
\begin{tikzpicture}
[scale=1,description/.style=auto]
\node (a0) at (0,8){{\textsl{1}}};
\node (a11) at (-.6,5.7){\textsl{2}};
\node (a12) at (2,5.7){\textsl{3}};
\node (a30) at (-1.5-1.2,4){\textsl{3}};
\draw [-latex,thin] (a0)  to node[right]{{\small $m$}}(a11);
\draw [-latex,thin] (a0)  to node[right]{{\small $r$}}(a12);
\draw [-latex,thin] (a0)  to node[left]{{\small $\ell$}}(a30);
\node (a31) at (-.6-.65,4){\textsl{3}};
\node (a32) at (-.6+.65,4){\textsl{3}};
\node (a33) at (2-.65,4){\textsl{2}};
\node (a34) at (2+.65,4){\textsl{2}};
\draw [dashed] (a11)--(a33);
\draw [dashed] (a33)--(a34);
\draw [densely dashdotted] (a30)--(a31);
\draw [densely dashdotted] (a31)--(a32);
\draw [densely dashdotted] (a32)--(a12);
\draw [-latex,thin] (a11) to node[left]{{\small $a$}}(a31);
\draw [-latex,thin] (a11) to node[right]{{\small $b$}}(a32);
\draw [-latex,thin] (a12) to node[left]{{\small $x$}}(a33);
\draw [-latex,thin] (a12) to node[right]{{\small $y$}}(a34);
\node (a401) at (-1.5-1.2-.35,2.35){{\footnotesize $z_1$}};
\node (a402) at (-1.5-1.2+.35,2.35){{\footnotesize $z_2$}};
\node (a411) at (-.6-.65-.35,2.35){{\footnotesize $z_3$}};
\node (a412) at (-.6-.65+.35,2.35){{\footnotesize $z_4$}};
\node (a421) at (-.6+.65-.35,2.35){{\footnotesize $z_5$}};
\node (a422) at (-.6+.65+.35,2.35){{\footnotesize $z_6$}};
\node (a431) at (2-.65-.35,2.35){{\footnotesize $z_7$}};
\node (a432) at (2-.65+.35,2.35){{\footnotesize $z_9$}};
\node (a441) at (2+.65-.35,2.35){{\footnotesize $z_8$}};
\node (a442) at (2+.65+.35,2.35){{\footnotesize $z_{10}$}};
\draw [-latex,thin] (a30) to node[left]{{\small $x$}}(a401);
\draw [-latex,thin] (a30) to node[right]{{\small $y$}}(a402);
\draw [-latex,thin] (a31) to node[left]{{\small $x$}}(a411);
\draw [-latex,thin] (a31) to node[right]{{\small $y$}}(a412);
\draw [-latex,thin] (a32) to node[left]{{\small $x$}}(a421);
\draw [-latex,thin] (a32) to node[right]{{\small $y$}}(a422);
\draw [-latex,thin] (a33) to node[left]{{\small $a$}}(a431);
\draw [-latex,thin] (a33) to node[right]{{\small $b$}}(a432);
\draw [-latex,thin] (a34) to node[left]{{\small $a$}}(a441);
\draw [-latex,thin] (a34) to node[right]{{\small $b$}}(a442);
\end{tikzpicture}

\vspace{2mm}

\caption{A $\sigma$-inverse of the game in the right hand side of Figure \ref{fig:interchanginggeneral}.\label{fig:iotainverse}}
\end{figure}
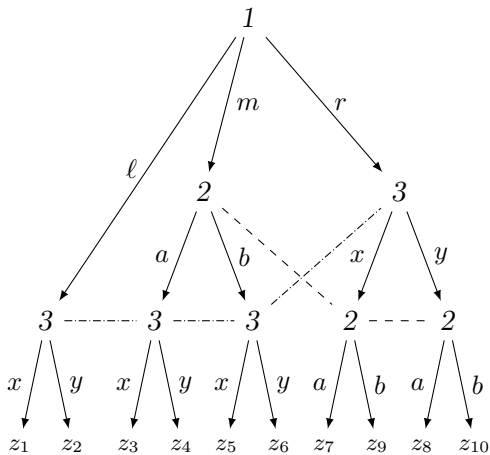

\medskip

Note that the classical \textquotedblleft Interchanging of Simultaneous Moves\textquotedblright\, transformation used in the literature (such as the one represented in Figure \ref{fig:infl}) can be always obtained as follows: $G$ and $G^\prime$ are related by \textquotedblleft Interchanging\textquotedblright\, if 
$G^\prime \in \sigma^{-1}(G^{\prime\prime})$, for some $G^{\prime\prime} \in \sigma(G)$.


\begin{lemma}\label{lem:iota}
Interchanging/Simultanizing is an invariant transformation, that is, $\sigma \in \mathcal{T}$.
\end{lemma}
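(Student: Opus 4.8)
The plan is to proceed exactly as in the proof of Lemma~\ref{lem:gamma}. Fix $G\in\mathrm{dom}(\sigma)$ together with $h\in H$, $i\in I$ and $\emph{\textbf{d}}_i\subseteq\emph{\textbf{h}}_i$ with $h\lessdot_i\emph{\textbf{d}}_i$, and exhibit bijections $f_j:\mathcal{S}_j\to\tilde{\mathcal{S}}_j$ for each $j\in I$ and a bijection $g:Z\to\tilde Z$ such that $g(\zeta(s))=\tilde\zeta(f(s))$ for every $s\in\mathcal{S}$, where $f(s):=(f_j(s_j))_{j\in I}$; this data witnesses $\mathrm{rn}_Z(G)\simeq\mathrm{rn}_Z(\sigma(G;h,\emph{\textbf{d}}_i))$. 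Since the transformation rewrites only histories in the subtree rooted at $h$, for every player $j\neq i$ the information partition and the feasibility correspondence are transported verbatim (only the labels of the affected nodes change), which yields an obvious isomorphism $\mathcal{S}_j\cong\tilde{\mathcal{S}}_j$; this applies in particular to the players in $I(h)$, who now move simultaneously with $i$.

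The heart of the strategy bookkeeping is $f_i$. By $\emph{\textbf{H}}_i$-measurability one has $\tilde F_i(\tilde h)=F_i(\emph{\textbf{d}}_i)=F_i(\emph{\textbf{h}}_i)$, and by construction the histories of $\emph{\textbf{d}}_i$, once shifted back to the last coordinate of $h$, stay pooled with $\emph{\textbf{h}}_i\setminus\emph{\textbf{d}}_i$ inside a single information set $\tilde{\emph{\textbf{h}}}_i$. Hence $\tilde{\emph{\textbf{H}}}_i$ is indexed by the same information sets as $\emph{\textbf{H}}_i$ and carries matching feasible-action sets, so $\tilde S_i=\prod_{\tilde{\emph{\textbf{h}}}_i}\tilde F_i(\tilde{\emph{\textbf{h}}}_i)$ is canonically identified with $S_i$. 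I would then check that this identification preserves realization equivalence---an immediate consequence of the realization square below---so that it descends to a bijection $f_i:\mathcal{S}_i\to\tilde{\mathcal{S}}_i$ on structurally reduced strategies.

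For the commuting square, let $g:Z\to\tilde Z$ be the relabeling of terminal histories induced by the construction of $\tilde{\bar H}$ (the set of outcomes is unchanged; only the action strings leading to them are rewritten). The crucial point is that for each $s\in S$ the terminal node reached in $G$ and the one reached in $\sigma(G;h,\emph{\textbf{d}}_i)$ correspond under $g$. Off the subtree rooted at $h$ the play is literally the same, so $s$ reaches $h$ in $G$ if and only if the corresponding profile reaches the merged node in $\tilde G$; below $h$, the only change is that the move(s) of $I(h)$ and the move of $i$ at $\emph{\textbf{d}}_i$ are taken simultaneously instead of sequentially. Since all histories of $\emph{\textbf{d}}_i$ lie in the single information set $\emph{\textbf{h}}_i$, player $i$'s action there cannot depend on the move just made at $h$, so the sequencing is immaterial and the same continuation---hence the same terminal node---is selected. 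This gives $g(\zeta(s))=\tilde\zeta(f(s))$ for all $s\in\mathcal{S}$, which is the required isomorphism.

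I expect the genuine obstacle to lie not in these bijections but in verifying that $\sigma(G;h,\emph{\textbf{d}}_i)$ is a legitimate element of $\mathcal{G}$. One must confirm that the rewritten tuple still satisfies the tree and feasibility axioms, the disjointness condition~\eqref{eq:disjointinfosets}, and---most delicately---perfect recall \eqref{eq:absmind}: pooling $\emph{\textbf{d}}_i$ with $\emph{\textbf{h}}_i\setminus\emph{\textbf{d}}_i$ at different depths and converting a sequential move into a simultaneous one must not create absent-mindedness nor cause any player to lose a distinction he previously had. I would dispatch this by tracking, for each player and each play, the ordered record of his own information sets and chosen actions, and checking that this record is left invariant (up to the relabeling $g$) by the backward shift; the hypothesis $Z(h)=Z(\emph{\textbf{d}}_i)$ together with $i\notin I(h)$ is precisely what guarantees that the shift is well defined and preserves this record.
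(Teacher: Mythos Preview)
Your proposal is correct and follows the same approach as the paper: exhibit bijections $f_j:\mathcal{S}_j\to\tilde{\mathcal{S}}_j$ for each $j\in I$ and $g:Z\to\tilde Z$ making the realization square commute, where the key observation is that the strategy sets $S_j$ and $\tilde S_j$ are canonically isomorphic for every player (including $i$). The paper's proof is considerably more terse than yours---it dispatches the whole argument with the single phrase ``it follows by construction that $S_j$ is isomorphic to $\tilde S_j$ for all $j\in I$''---and in particular it does not explicitly address the well-formedness issues you raise in your final paragraph; those are absorbed into the declared codomain $\sigma:\mathcal{G}\twoheadrightarrow\mathcal{G}$.
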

\begin{proof}
Fix $G \in \mathrm{dom}(\sigma)$ and consider the transformed game $\sigma(G; h, \emph{\textbf{d}}_i)$. It follows by construction that $S_j$ is isomorphic to $\tilde{S}_j$ for all $j \in I$. In particular, $\mathcal{S}_j$ is isomorphic to $\tilde{\mathcal{S}}_j$ for all $j \in I$, so that 
there exist bijections $f_j: \mathcal{S}_j\to \tilde{\mathcal{S}}_j$ and a bijection $g: Z\to \tilde{Z}$ such that $g(\zeta(s))=\tilde{\zeta}(f(s))$ for all $s \in \mathcal{S}$, where $f(s)$ is the structurally reduced strategy profile $(f_j(s_j))_{j \in I}$. 
In particular, $\mathrm{rn}_Z(G) \simeq \mathrm{rn}_Z(\sigma(G; h, \emph{\textbf{d}}_i))$, i.e., $G$ and $\sigma(G; h, \emph{\textbf{d}}_i)$ are behaviorally equivalent.
\end{proof}

\begin{remark}\label{remark:compositioninvarianttransformations}
It is immediate to see that the composition of invariant transformations is invariant.  
In particular, by Lemma \ref{lem:gamma} and Lemma \ref{lem:iota}, the compositions of transformations $\gamma$ and $\sigma$ and their inverses are invariant. 
First, define $ \overset{\mathrm{iso}}{\in}$ as inclusion up to isomorphisms. 
Then, a transformation $T: \mathcal{G} \twoheadrightarrow \mathcal{G}$ with $\mathrm{dom}(T)\neq \emptyset$ is invariant provided that
\begin{equation}\label{eq:condition}
\forall G \in \mathcal{G},\forall G^\prime \in T(G), \exists n \in \mathbb{N}_0, \exists \iota_1,\ldots,\iota_n \in \{\gamma, \gamma^{-1}, \sigma, \sigma^{-1}\},\,\,\,\,\,\,\,\, G^\prime \overset{\mathrm{iso}}{\in} (\iota_1 \circ \cdots \circ \iota_n)(G),
\end{equation}
where, by convention, $(\iota_1 \circ \cdots \circ \iota_n)(G):=\{G\}$ if $n=0$.
\end{remark}


\section{Characterization of Behavioral Equivalence}\label{sec:mainth}

%
Fix a game structure $G \in\mathcal{G}$. Then we say that:
\begin{itemize}
\item $G$ has a \textbf{coalescing opportunity} if we can find in $G$ two
information sets $\emph{\textbf{h}}_{i},\emph{\textbf{h}}_{i}^{\prime }$ of the same
player $i$ such that $\emph{\textbf{h}}_{i}^{\prime }$ controls $\emph{\textbf{h}}_{i}$, that is, $G \in \mathrm{dom}(\gamma)$;
\item $G$ has a \textbf{simultanizing opportunity} if we can find in $G$ a
non terminal history $h$ and a subset $\emph{\textbf{d}}_{i}$ of an information set
of a player $i$ (not active at $h$) such that $\emph{\textbf{d}}_{i}$ dominates $h$, that is, $G \in \mathrm{dom}(\sigma)$;
\item $G$ is \textbf{minimal} if it has no coalescing or simultanizing opportunity, that is, 
$$
G \in \widehat{\mathcal{G}}:=\mathcal{G}\setminus (\mathrm{dom}(\sigma) \cup \mathrm{dom}(\gamma)).
$$
\end{itemize}
In addition,
\begin{itemize}
\item A game structure $G^{\prime} \in \mathcal{G}$ is a \textbf{reduction} of $G$ if there is a finite sequence of game structures $\left(G_1,\ldots,G_n\right)$, with $n \in \mathbb{N}$ such that $G_{1}=G$, $G_n=G^{\prime}$, and $G_{k+1}\in \gamma(G_{k}) \cup \sigma(G_k)$ for each $k=1,\ldots,n-1$; in particular, $G$ is a reduction of $G$ itself.\footnote{We use the term \textquotedblleft reduction\textquotedblright\, because, for both transformations, the height of the tree decreases (in a weak sense). In addition, note that transformation $\gamma$ reduces strictly the sum of the cardinalities of the action sets of the involved player.}
\end{itemize}

Note that, if we interpret $\gamma $ and $\sigma $ as binary relations on $\mathcal{G}$ (with $G\,\gamma\, G^{\prime }$ if $G^{\prime}\in \gamma \left(G\right)$ and $G\,\sigma\, G^{\prime }$ if $G^{\prime }\in \sigma \left( G\right) $) 
and let $\left(\gamma \cup \sigma \right) ^{\star}$ denote the transitive closure of relation $\gamma \cup \sigma$, 
then $G^{\prime }$ is a reduction of $G$ if and only if $G=G^\prime$ or $G\left( \gamma \cup \sigma \right) ^{\star}G^{\prime }$.

With these premises, we are going to show in Lemma \ref{lem:minimalextensive} 
below that each game structure $G \in \mathcal{G}$ has a unique minimal reduction. 
Then our main result follows: 
\begin{theorem}\label{th:mainresult}
For all extensive game structures $G, G \in \mathcal{G}$, the following are equivalent:%
\begin{enumerate}[label={\rm (\textsc{a}\arabic{*})}]
\item \label{item:a1} $G$ and $G^\prime$ are behaviorally equivalent; 
\item \label{item:a2} $G$ and $G^\prime$ have the same minimal reduction; 
\item \label{item:a3} $G$ and $G^\prime$ have a common reduction, up to
isomorphisms;
\item \label{item:a4} $G$ can be transformed into $G^\prime$, up to isomorphisms, through a 
\textup{(}possibly empty\textup{)} 
finite chain of transformations $\gamma$ and $\sigma$ and their inverses.
\end{enumerate}
\end{theorem}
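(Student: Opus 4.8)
The plan is to prove the cycle of implications \ref{item:a1} $\Rightarrow$ \ref{item:a2} $\Rightarrow$ \ref{item:a3} $\Rightarrow$ \ref{item:a4} $\Rightarrow$ \ref{item:a1}, using Lemma~\ref{lem:minimalextensive} for the existence and uniqueness (up to isomorphism) of the minimal reduction $\widehat{G}$ of each $G\in\mathcal{G}$. Three of the four links are routine manipulations of reductions, and essentially all the work is concentrated in \ref{item:a1} $\Rightarrow$ \ref{item:a2}.

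The three routine links go as follows. For \ref{item:a2} $\Rightarrow$ \ref{item:a3} there is nothing to prove, since a common minimal reduction is in particular a common reduction. For \ref{item:a3} $\Rightarrow$ \ref{item:a4}, suppose $G^\star$ is a reduction of both $G$ and $G'$ up to isomorphism; then $G\,(\gamma\cup\sigma)^{\star}\,G^\star$ and $G'\,(\gamma\cup\sigma)^{\star}\,G^\star$, and reading the second chain backwards through the inverse correspondences $\gamma^{-1}$ and $\sigma^{-1}$ and concatenating it after the first produces a finite chain of $\gamma,\sigma,\gamma^{-1},\sigma^{-1}$ carrying $G$ to $G'$ up to isomorphism, which is exactly \ref{item:a4}. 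For \ref{item:a4} $\Rightarrow$ \ref{item:a1}, each of $\gamma$ and $\sigma$ is invariant by Lemmas~\ref{lem:gamma} and~\ref{lem:iota}; by Remark~\ref{remark:compositioninvarianttransformations} invariance is preserved under composition and inversion, so a chain as in \ref{item:a4} is itself an invariant transformation, and since behavioral equivalence of structures is an equivalence relation (it is isomorphism of $Z$-reduced normal forms), $G$ and $G'$ are behaviorally equivalent.

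For the substantial implication \ref{item:a1} $\Rightarrow$ \ref{item:a2}, I would first reduce it to a rigidity statement. Let $\widehat{G}$ and $\widehat{G'}$ be the minimal reductions of $G$ and $G'$ given by Lemma~\ref{lem:minimalextensive}. Since a reduction is a composition of $\gamma$ and $\sigma$, the already-proved implication \ref{item:a4} $\Rightarrow$ \ref{item:a1} shows that $G$ is behaviorally equivalent to $\widehat{G}$ and $G'$ to $\widehat{G'}$; combining this with the hypothesis \ref{item:a1} and transitivity, $\widehat{G}$ and $\widehat{G'}$ are behaviorally equivalent, and both are minimal. Thus the theorem follows once we establish the following \emph{rigidity} of minimal structures: two behaviorally equivalent minimal game structures are isomorphic. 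Granting this, $\widehat{G}$ and $\widehat{G'}$ are isomorphic, so $G$ and $G'$ have the same minimal reduction, which is \ref{item:a2}.

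To prove rigidity, I would reconstruct a minimal structure canonically from the abstract data $\langle I,(\mathcal{S}_i)_{i\in I},Z,\tilde{\zeta}\rangle$ of its $Z$-reduced normal form by a top-down induction along the tree. At the root one must recover the set of active players $I(\varnothing)$, the feasible action set of each such player, and the resulting partition of $Z$ into first-level subtrees; one then recurses on each subtree, which is itself minimal with an induced reduced normal form. The point is that minimality forces each of these choices: the absence of a coalescing opportunity (Definition~\ref{def:controllingset}) prevents a single player's decision from being split redundantly across two of his consecutive information sets, and the absence of a simultanizing opportunity (Definition~\ref{def:dominatingset}) prevents any move from being pulled back to an earlier history it dominates. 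Hence any isomorphism of reduced normal forms is compatible with the two reconstructions and lifts to an isomorphism of the trees. The main obstacle is precisely this last point: proving that the two minimality conditions are exactly what is needed to make the top-down reconstruction unique — in particular, that behaviorally equivalent minimal structures cannot disagree on who is active at the root or on the induced partition of $Z$ — is the delicate combinatorial core of the argument, and it is where the structural hypotheses on feasibility and perfect recall enter.
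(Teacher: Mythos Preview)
Your proposal is correct and follows essentially the same route as the paper. The cycle \ref{item:a2} $\Rightarrow$ \ref{item:a3} $\Rightarrow$ \ref{item:a4} $\Rightarrow$ \ref{item:a1} matches the paper's easy directions, and for \ref{item:a1} $\Rightarrow$ \ref{item:a2} you correctly isolate the rigidity statement---two behaviorally equivalent minimal game structures are isomorphic---as the crux; but note that this is precisely Lemma~\ref{lem:lemmaisomorphic}, already proved in the paper via exactly the top-down reconstruction from $\mathrm{rn}_Z$ that you sketch, so you may simply cite it rather than reprove it.
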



Before we prove our characterization, note that the order on which the two transformations $\gamma$ and $\sigma$ are applied matters. Indeed, as we show in Example \ref{example:independent}, it is possible that $G \in \mathrm{dom}(\gamma)\, \cap\, \mathrm{dom}(\sigma)$ and, at the same time, $\gamma(G) \cap \mathrm{dom}(\sigma)=\emptyset$.
\begin{example}\label{example:independent}
Consider the game structure $G$ in left hand side of Figure \ref{fig:noindependent}. Transformation $\sigma$ may be applied to the subgame with root $(t)$. However, considering that the bottom information set of player \textsl{1}, say $\emph{\textbf{h}}_{\textsl{1}}$, controls $\{(t)\}$, we may apply also the $\gamma$ transformation, which yields the game structure $\gamma(G; \{(t)\}, \emph{\textbf{h}}_{\textsl{1}})$ in the right hand side of Figure \ref{fig:noindependent}. It can be checked that $\sigma$ cannot be applied to $\gamma(G; \{(t)\}, \emph{\textbf{h}}_{\textsl{1}})$, that is, $\gamma(G; \{(t)\}, \emph{\textbf{h}}_{\textsl{1}}) \notin \mathrm{dom}(\sigma)$.

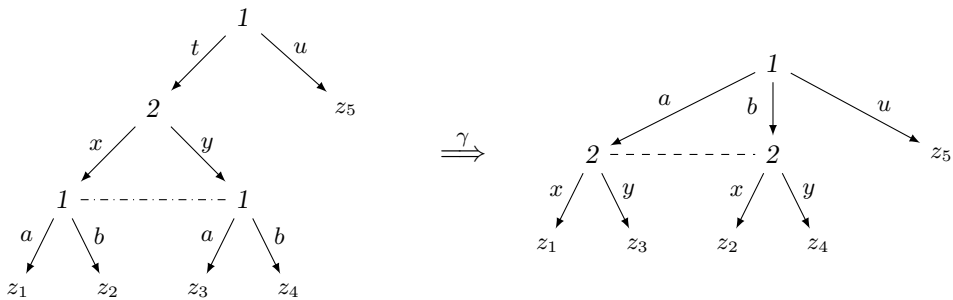
\begin{figure}[!htbp]
\centering
\begin{minipage}[c]{0.41\textwidth}
\centering
\begin{tikzpicture}
[scale=.8,description/.style=auto]
\node (a00) at (1.5,4.5){{\small {\textsl{1}}}};
\node (a01) at (0,3){{\small {\textsl{2}}}};
\node (a11) at (-2+.5,1.5){{\small {\textsl{1}}}};
\node (a12) at (2-.5,1.5){{\small {\textsl{1}}}};
\node (a13) at (3.2,3){{\footnotesize $z_5$}};
\node (a21) at (1.5-.75,0){{\footnotesize $z_3$}};
\node (a22) at (1.5+.75,0){{\footnotesize $z_4$}};
\node (a31) at (-1.5-.75,0){{\footnotesize $z_1$}};
\node (a32) at (-1.5+.75,0){{\footnotesize $z_2$}};
\draw [-latex,thin](a00)--(a01);
\draw [-latex,thin](a00)--(a13);
\draw [-latex,thin](a01)--(a11);
\draw [-latex,thin](a01)--(a12);
\draw [-latex,thin](a12)--(a21);
\draw [-latex,thin](a12)--(a22);
\draw [-latex,thin](a11)--(a31);
\draw [-latex,thin](a11)--(a32);
\draw [dashdotted](a11)--(a12);
\node (label01) at (-1.3+.3+.7-.65,2.4){{\footnotesize $x$}};
\node (label02) at (.9,2.4){{\footnotesize $y$}};
\node (label03) at (.9,.9){{\footnotesize $a$}};
\node (label04) at (2.1,.9){{\footnotesize $b$}};
\node (label05) at (.9-3,.9){{\footnotesize $a$}};
\node (label06) at (2.1-3,.9){{\footnotesize $b$}};
\node (label07) at (1.5-.8,4){{\footnotesize $t$}};
\node (label08) at (1.5+.95,4){{\footnotesize $u$}};
\end{tikzpicture}
\end{minipage}
$\overset{\gamma}{\implies}$
\begin{minipage}[c]{0.41\textwidth}
\centering
\begin{tikzpicture}
[scale=.8,description/.style=auto]
\node (a01) at (1.5,3){{\small {\textsl{1}}}};
\node (a11) at (-2+.5,1.5){{\small {\textsl{2}}}};
\node (a12) at (2-.5,1.5){{\small {\textsl{2}}}};
\node (a13) at (4.3,1.5){{\footnotesize $z_5$}};
\node (a21) at (1.5-.75,0){{\footnotesize $z_2$}};
\node (a22) at (1.5+.75,0){{\footnotesize $z_4$}};
\node (a31) at (-1.5-.75,0){{\footnotesize $z_1$}};
\node (a32) at (-1.5+.75,0){{\footnotesize $z_3$}};
\draw [-latex,thin](a01)--(a11);
\draw [-latex,thin](a01)--(a12);
\draw [-latex,thin](a01)--(a13);
\draw [-latex,thin](a12)--(a21);
\draw [-latex,thin](a12)--(a22);
\draw [-latex,thin](a11)--(a31);
\draw [-latex,thin](a11)--(a32);
\draw [dashed,thin](a11)--(a12);
\node (label01) at (-1.3+.3+.7,2.4){{\footnotesize $a$}};
\node (label02) at (1.15,2.3){{\footnotesize $b$}};
\node (label07) at (3.35,2.3){{\footnotesize $u$}};
\node (label03) at (.9,.9){{\footnotesize $x$}};
\node (label04) at (2.1,.9){{\footnotesize $y$}};
\node (label05) at (.9-3,.9){{\footnotesize $x$}};
\node (label06) at (2.1-3,.9){{\footnotesize $y$}};
\end{tikzpicture}
\end{minipage}
\caption{$\gamma$ and $\sigma$ are not independent.\label{fig:noindependent}}
\end{figure}
\end{example}


While standard reduced normal forms defined in terms of payoffs do not allow the identification of a game tree, for $Z$-reduced normal forms we can instead identify an \textquotedblleft essentially unique\textquotedblright\, (that is, up to isomorphisms) minimal game structure. The next result consists in obtaining such essentially unique game.

\begin{lemma}\label{lem:lemmaisomorphic}
Let $\widehat{G}$ and $\widehat{G}^\prime$ be two behaviorally equivalent minimal game structures. Then $\widehat{G}=\widehat{G}^\prime$, up to isomorphisms.
\end{lemma}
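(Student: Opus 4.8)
The plan is to show that a minimal game structure is recoverable, up to isomorphism, from its $Z$-reduced normal form; since behavioral equivalence means precisely that $\mathrm{rn}_Z(\widehat{G})\simeq \mathrm{rn}_Z(\widehat{G}^\prime)$ (Definition \ref{defi:behavioralequivalence}), this will force $\widehat{G}=\widehat{G}^\prime$ up to isomorphism. The first step is to isolate a combinatorial invariant of $\mathrm{rn}_Z(G)=\langle I,(\mathcal{S}_i)_{i\in I},Z,\tilde{\zeta}\rangle$ that is manifestly preserved by isomorphisms and yet rich enough to see the tree. For each player $i$ and terminal node $z$ I would use the consistency set
\[
C_i(z):=\{\sigma_i\in\mathcal{S}_i:\exists\,\sigma_{-i}\in\mathcal{S}_{-i},\ \tilde{\zeta}(\sigma_i,\sigma_{-i})=z\},
\]
which is computed from $\tilde{\zeta}$ alone. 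The key bridge to the extensive form is the product identity $\tilde{\zeta}^{-1}(z)=\prod_{i\in I}C_i(z)$: one inclusion is projection, and the reverse follows because, along the path from $\varnothing$ to $z$, membership in $C_i(z)$ forces each player to choose the unique toward-$z$ action at exactly those of his information sets that lie on this path (and leaves him free elsewhere), so that if every player is individually consistent with $z$ the path to $z$ is actually traced out. Thus each $C_i(z)$ is a ``cylinder'' cut out by a chain of information-set/action constraints, and the inclusion order among the $C_i(z)$ recovers, for each fixed $i$, the directed forest of information sets of $i$ together with the actions labeling its edges (here perfect recall and Remarks \ref{remark:controllingset}--\ref{remark:dominatingset} guarantee these chains and their continuations are unambiguous).

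Next I would reconstruct the global tree by induction on its height, peeling off the root. The task is to read off from $\mathrm{rn}_Z(G)$ the set of active players $I(\varnothing)$, their root feasibility sets, and the induced partition $Z=\bigsqcup_{a}Z((a))$ of terminal nodes into immediate subtrees indexed by root action profiles $a\in\prod_{i\in I(\varnothing)}F_i(\varnothing)$, in such a way that each block carries the reduced normal form of the corresponding subgame. Concretely, $i\in I(\varnothing)$ is detected by the presence, in the forest recovered in the first step, of an information set of $i$ that constrains $C_i(z)$ for every $z$ (the root information set $\{\varnothing\}$); the root-action partition of $\mathcal{S}_i$ is the top-level coarsening of $\{C_i(z)\}_z$ in that forest; and two terminal nodes lie in the same block of $Z$ exactly when the root-action constraints imposed by the $C_i(\cdot)$ are jointly compatible. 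Given the decomposition, each immediate subtree is again minimal and has a well-defined reduced normal form, so the inductive hypothesis applies and yields an isomorphism of subgames; these are then glued, using the cross-subtree information sets identified via the $C_i(\cdot)$, into a global isomorphism $\widehat{G}\to\widehat{G}^\prime$. One must also check along the way that the reconstructed information partitions satisfy $\emph{\textbf{H}}_i$-measurability, the disjointness convention \eqref{eq:disjointinfosets}, and perfect recall; these are routine given the cylinder description of the $C_i(z)$.

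The crux --- and the only place minimality is genuinely used --- is establishing that this root decomposition is \emph{forced}, i.e.\ uniquely determined by $\mathrm{rn}_Z(G)$. This is exactly the point where, for a non-minimal game, competing extensive forms share a normal form: a simultanizing opportunity would let a move be pulled back into the root (changing $I(\varnothing)$), while a coalescing opportunity would let a root player's decision be split into two sequential moves (changing $F_i(\varnothing)$ and the subtree partition). I would therefore argue that $\widehat{G}\in\widehat{\mathcal{G}}$, i.e.\ $\widehat{G}\notin\mathrm{dom}(\sigma)\cup\mathrm{dom}(\gamma)$, rules out precisely these two ambiguities: the absence of a simultanizing opportunity ($h\lessdot_i\emph{\textbf{d}}_i$) means no later move dominates a history, so every genuinely-root interaction is already at the root, pinning down $I(\varnothing)$ and the partition of $Z$; the absence of a coalescing opportunity ($\emph{\textbf{h}}_i\ll_i\emph{\textbf{h}}_i^\prime$) means no information set of $i$ controls another, so $i$ does not move twice in immediate succession, pinning down the single root action set $F_i(\varnothing)$. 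Making these two implications precise --- translating ``no $h\lessdot_i\emph{\textbf{d}}_i$'' and ``no $\emph{\textbf{h}}_i\ll_i\emph{\textbf{h}}_i^\prime$'' into the statement that the invariant $\{C_i(z)\}$ admits a unique top-level factorization --- is the main obstacle and the technical heart of the argument; the remaining gluing and the verification of the structural axioms are then bookkeeping.
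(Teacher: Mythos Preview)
Your plan is correct and very close in spirit to the paper's own proof: both arguments reconstruct a minimal game structure from its $Z$-reduced normal form by induction on the height of the tree, and both use minimality at exactly the two places you identify---absence of a simultanizing opportunity to pin down $I(\varnothing)$, absence of a coalescing opportunity to pin down $F_i(\varnothing)$.

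The main difference is the choice of combinatorial invariant. The paper works with the single partition
\[
\textstyle \mathscr{P}_i^\star:=\sup\bigl\{\mathscr{P}_i\in\bm{\mathscr{P}}\text{{\scriptsize art}}(\mathcal{S}_i):\{\tilde{\zeta}(P_i\times\mathcal{S}_{-i}):P_i\in\mathscr{P}_i\}\in\bm{\mathscr{P}}\text{{\scriptsize art}}(Z)\bigr\},
\]
and shows via four short claims that, in a minimal game, $i\in I(\varnothing)$ iff $\mathscr{P}_i^\star\neq\{\mathcal{S}_i\}$ and that $|F_i(\varnothing)|=|\mathscr{P}_i^\star|$; then it recurses on the cells. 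Your invariant is the family $\{C_i(z)\}_{z\in Z}$, which is strictly richer: indeed $\mathscr{P}_i^\star$ is exactly the finest partition of $\mathcal{S}_i$ each of whose cells contains some $C_i(z)$, so your data determine the paper's but not conversely. The product identity $\tilde{\zeta}^{-1}(z)=\prod_i C_i(z)$ is a clean observation the paper does not isolate, and it makes the ``subtree reduced normal forms'' in the inductive step transparent. On the other hand, your assertion that the inclusion order on $\{C_i(z)\}_z$ already recovers the directed forest of $i$'s information sets with its edge-labels is doing real work that you have not yet carried out; the paper sidesteps this by extracting from the $C_i(z)$ only the coarser datum $\mathscr{P}_i^\star$, which is all that is needed at each inductive step. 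Either bookkeeping can be completed, and the uses of $\widehat{G}\notin\mathrm{dom}(\sigma)$ and $\widehat{G}\notin\mathrm{dom}(\gamma)$ are identical in the two arguments.
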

\begin{proof}
Let $\mathcal{G}_0$ be the set of extensive game structures that are behaviorally equivalent to $\widehat{G}$, that is, 
$$
\mathcal{G}_0:=\left\{G \in \mathcal{G}: \mathrm{rn}_Z(G)\simeq \mathrm{rn}_Z(\widehat{G})\right\}.
$$
It is claimed that, up to isomorphisms, $\mathcal{G}_0 \cap \widehat{\mathcal{G}}$ is a singleton.

Fix a \emph{minimal} extensive game structure 
$
G_0
\in \mathcal{G}_0 \cap \widehat{\mathcal{G}}. 
$ 
First, we claim that, up to isomorphisms, at the root of $G_0$ there are the same active players as at the root of $\widehat{G}$ with the same feasible actions. 

By Lemma \ref{partitions}, for each game $G$ and for each player $i$, we can define a partition $\mathscr{P}_{i}^\star$ of $\mathcal{S}_i$ as follows:
\begin{equation}\label{eq:suppartition}
\textstyle \mathscr{P}_{i}^\star:=\sup \left\{\mathscr{P}_i \in \bm{\mathscr{P}}\text{{\scriptsize art}}(\mathcal{S}_i): \{\tilde{\zeta}(P_i\times \mathcal{S}_{-i}): P_i \in \mathscr{P}_i\}\in \bm{\mathscr{P}}\text{{\scriptsize art}}(Z)\right\}.
\end{equation}
(Indeed, the above collection contains $\{\mathcal{S}_i\}$, hence it is nonempty.) In other words, $\mathscr{P}_{i}^\star$ is the finest partition of the set of structurally reduced strategies $\mathcal{S}_i$ of player $i$ which, independently of the profile of strategies of the opponents $s_{-i}\in \mathcal{S}_{-i}$, induces a partition of the terminal histories through the map $\tilde{\zeta}$. 
In particular, we will prove that, if $i$ is active at $\varnothing$ (that is, $i \in I(\varnothing)$), then the partitions of $\mathcal{S}_i$ and $Z$ correspond to his set of feasible actions $F_i(\varnothing)$; whereas if $i \notin I(\varnothing)$ the partitions of $\mathcal{S}_i$ and $Z$ are trivial. 

For each player $i \in I$ and information set $\emph{\textbf{h}}_{i} \in \emph{\textbf{H}}_i$, define the partition 
\begin{equation}\label{eq:partitiondefinition}
\mathscr{P}_i(\emph{\textbf{h}}_{i}):=\{P_i(\emph{\textbf{h}}_{i},a_i): a_i \in F_i(\emph{\textbf{h}}_{i})\}\in \bm{\mathscr{P}}\text{{\scriptsize art}}(\mathcal{S}_i(\emph{\textbf{h}}_{i})),
\end{equation}
where $\mathcal{S}_i(\emph{\textbf{h}}_{i})$ denotes the set of structurally reduced strategies $s_i \in \mathcal{S}_i$ which are consistent with $\emph{\textbf{h}}_{i}$, and 
$P_i(\emph{\textbf{h}}_{i},a_i)$ denotes the set of strategies $s_i \in \mathcal{S}_i(\emph{\textbf{h}}_{i})$ which select $a_i$ at $\emph{\textbf{h}}_{i}$. Note also that $\mathscr{P}_i(\emph{\textbf{h}}_{i})$ induces the following partition:
\begin{equation}\label{eq:partitionZ}
\mathscr{Z}_i(\emph{\textbf{h}}_{i}):=\left\{\tilde{\zeta}(P_i(\emph{\textbf{h}}_{i},a_i)\times \mathcal{S}_{-i}(\emph{\textbf{h}}_{i})): P_i(\emph{\textbf{h}}_{i},a_i) \in \mathscr{P}_i(\emph{\textbf{h}}_{i})\right\} \in \bm{\mathscr{P}}\text{{\scriptsize art}}(Z(\emph{\textbf{h}}_{i})).
\end{equation}
In particular, $|\mathscr{Z}_i(\emph{\textbf{h}}_{i})|=|\mathscr{P}_i(\emph{\textbf{h}}_{i})|=|F_i(\emph{\textbf{h}}_i)|$.

\begin{claim}\label{claim1}
For each game $G$ and for each player $i$, 
if $i \in I(\varnothing)$ then $\mathscr{P}_{i}^\star\neq \{\mathcal{S}_i\}$.
\end{claim}
\begin{proof}
Since $\{\varnothing\} \in \emph{\textbf{H}}_i$, $\mathcal{S}_i(\{\varnothing\})=\mathcal{S}_i$, and $Z(\{\varnothing\})=Z$, we have by \eqref{eq:partitiondefinition} and \eqref{eq:partitionZ} that $\mathscr{P}_i(\{\varnothing\}) \in \bm{\mathscr{P}}\text{{\scriptsize art}}(\mathcal{S}_i)$ and $\mathscr{Z}_i(\{\varnothing\}) \in \bm{\mathscr{P}}\text{{\scriptsize art}}(Z)$. Hence 
$|\mathscr{P}_i^\star| \ge |\mathscr{P}_i(\{\varnothing\})|=|F_i(\varnothing)|\ge 2$.
\end{proof}

\begin{claim}\label{claim2}
For each game $G$ and for each player $i$, 
if $\mathscr{P}_{i}^\star\neq \{\mathcal{S}_i\}$ then 
there exists $\emph{\textbf{h}}_i \in \emph{\textbf{H}}_i$ such that 
$Z(\emph{\textbf{h}}_i)=Z$.
\end{claim}
\begin{proof}
We prove the claim by induction on the lenght $\ell$ of the longest history of the game. 
If the game has longest history $\ell=1$, then $\emph{\textbf{h}}_i=\{\varnothing\}$ and the thesis is obvious. 
Suppose that the claim holds for all games such that the lenght of their longest history is at most $\ell \in \mathbb{N}$. 
Then let $G$ be a game structure such that the lenght of its longest history is $\ell+1$. If $i\in I(\varnothing)$ then, by Claim \ref{claim1}, $\mathscr{P}_{i}(\{\varnothing\})\neq \{\mathcal{S}_i\}$ and $Z(\{\varnothing\})=Z$. 

Suppose now that $i \notin I(\varnothing)$. In this case, let $\{h^{(1)},\ldots,h^{(k)}\}$ be the set of histories of lenght $1$ of $G$, so that $h^{(j)}=(a)$, with $a \in \prod_{\iota \in I(\varnothing)}F_\iota(\varnothing)$; also, for each $j=1,\ldots,k$, let $G^{(j)}$ be the \textquotedblleft subgame\textquotedblright\, which starts at $h^{(j)}$ defined by restriction on the subtree with root $h^{(j)}$ (cutting the information sets, if necessary). 
For each $j=1,\ldots,k$, the lenght of the longest history in $G^{(j)}$ is at most $\ell$, and by the induction hypothesis there exists a unique information set $\emph{\textbf{h}}_i^{(j)}$ of $i$ in $G^{(j)}$ for which $Z(\emph{\textbf{h}}_i^{(j)})=Z(h^{(j)})$. Let $\{P_1,\ldots,P_m\} \in \bm{\mathscr{P}}\text{{\scriptsize art}}(\mathcal{S}_i)$ be a partition of $\mathcal{S}_i$ into nonempty sets such that 
$$
\{\tilde{\zeta}(P_1\times \mathcal{S}_{-i}), \ldots, \tilde{\zeta}(P_m\times \mathcal{S}_{-i})\} \in \bm{\mathscr{P}}\text{{\scriptsize art}}(Z).
$$
Also, for each $j=1,\ldots,k$, let $\mathcal{S}_{-i}^{(j)}$ be the set of profiles of structurally reduced strategies $s_{-i} \in \mathcal{S}_{-i}$ consistent with history $h^{(j)}$. Then $\{\tilde{\zeta}(P_1\times \mathcal{S}_{-i}^{(j)}), \ldots, \tilde{\zeta}(P_m\times \mathcal{S}_{-i}^{(j)})\} \in \bm{\mathscr{P}}\text{{\scriptsize art}}(Z(h^{(j)}))$. 
It follows that the union of all information sets $\emph{\textbf{h}}_i^{(j)}$ of the \textquotedblleft subgames\textquotedblright\, $G^{(j)}$ is a unique information set of the original game $G$, so that $Z(\emph{\textbf{h}}_i)=\bigcup_{j=1}^kZ( \emph{\textbf{h}}_i^{(j)})=Z$.
\end{proof}

\begin{claim}\label{claim3}
If $\mathscr{P}_{i}^\star\neq \{\mathcal{S}_i\}$ in the minimal game $G_0$ then $i \in I(\varnothing)$. 
\end{claim}
\begin{proof}
Suppose by way of contradiction that $i \notin I(\varnothing)$ and $\mathscr{P}_{i}^\star\neq \{\mathcal{S}_i\}$. 
By Claim \ref{claim2}, there exists $\emph{\textbf{h}}_i \in \emph{\textbf{H}}_i$ such that 
$Z(\emph{\textbf{h}}_i)=Z$. 
Since $i\notin I(\varnothing)$, $\emph{\textbf{h}}_i \neq \{\varnothing\}$; furthermore, $\emph{\textbf{h}}_i$ dominates $\varnothing$. It follows that there is a simultanizing opportunity at $\varnothing$. This contradicts the minimality of $G_0$.
\end{proof}

It follows by Claims \ref{claim1}, \ref{claim2}, and \ref{claim3} that a player $i \in I$ is active at the root $\varnothing$ of $G_0$ if and only if $\mathscr{P}_{i}^\star\neq \{\mathcal{S}_i\}$. Then, we show that, if player $i$ is active, his set of feasible actions $F_i(\varnothing)$ can be labeled by the elements of $\mathscr{P}_{i}^\star$ (cf. Example \ref{ex:reconstruction} for an illustrative example):
\begin{claim}\label{claim4}
For each $i \in I(\varnothing)$ in the minimal game $G_0$, we have $|F_i(\varnothing)|=|\mathscr{P}_{i}^\star|$.
\end{claim}
\begin{proof}
Fix $i \in I(\varnothing)$. By definition $\mathscr{P}_{i}^\star$ is finer than $\mathscr{P}_i(\{\varnothing\})$ (defined in Equation \eqref{eq:partitiondefinition}), i.e., $\mathscr{P}_i(\{\varnothing\}) \le \mathscr{P}_{i}^\star$ and, in particular, $2\le |F_i(\varnothing)| \le |\mathscr{P}_i^\star|$. 
Suppose by way of contradiction that the second inequality is strict, i.e., $\mathscr{P}_i(\{\varnothing\}) \neq  \mathscr{P}_{i}^\star$. 
Then there exists a partition $\mathscr{P}_i \in \bm{\mathscr{P}}\text{{\scriptsize art}}(\mathcal{S}_i)$ such that $\{\tilde{\zeta}(P_i\times \mathcal{S}_{-i}): P_i \in \mathscr{P}_i\}\in \bm{\mathscr{P}}\text{{\scriptsize art}}(Z)$ and $\mathscr{P}_i \not\le \mathscr{P}_i(\{\varnothing\})$. In particular, $\mathscr{P}_i \neq \{\mathcal{S}_i\}$. 
However, by Lemma \ref{partitions}, $\bm{\mathscr{P}}\text{{\scriptsize art}}(\mathcal{S}_i)$ is a complete lattice and there exists 
$$
\mathscr{P}_i^\prime:=\sup\{\mathscr{P}_i(\{\varnothing\}), \mathscr{P}_i\} \in \bm{\mathscr{P}}\text{{\scriptsize art}}(\mathcal{S}_i).
$$
As in the proof of Claim \ref{claim2}, let $\{h^{(1)},\ldots,h^{(k)}\}$ be the set of histories of lenght $1$ of $G_0$, so that $h^{(j)}=(a)$, with $a \in \prod_{\iota \in I(\varnothing)}F_\iota(\varnothing)$; similarly, for each $j=1,\ldots,k$, let $G_0^{(j)}$ be the \textquotedblleft subgame\textquotedblright\, which starts at $h^{(j)}$ defined by restriction on the subtree with root $h^{(j)}$ (cutting the information sets, if necessary). 
Note that each $G_0^{(j)}$ must be minimal like $G_0$ (indeed, if there were a simultanizing or a coalescing opportunity in $G_0^{(j)}$, there would be also in $G_0$). 
Considering that $\mathscr{P}_i(\{\varnothing\}) \le \mathscr{P}_i^\prime$, $\mathscr{P}_i(\{\varnothing\}) \neq \mathscr{P}_i^\prime$, and that $\{\tilde{\zeta}(P_i^\prime\times \mathcal{S}_{-i}): P_i^\prime \in \mathscr{P}_i^\prime\}\in \bm{\mathscr{P}}\text{{\scriptsize art}}(Z)$, it follows that there exists $j \in \{1,\ldots,k\}$ such that the partition of the corresponding \textquotedblleft subgame\textquotedblright\, $G_0^{(j)}$ is not a singleton. 
Hence, by Claim \ref{claim3}, player $i$ is active at the root of $G_0^{(j)}$. 
Thus $\{\varnothing\}\ll_i \emph{\textbf{h}}_i$, where $\emph{\textbf{h}}_i$ is the information set of $i$ containing $h^{(j)}$. Therefore there is a coaleascing opportunity at $\varnothing$. This contradicts the minimality of $G_0$. 
\end{proof}

\bigskip


It follows that $G_0$ (hence, also $\widehat{G}^\prime$) has, up to isomorphisms, the same active players at the root and the same feasible actions of $\widehat{G}$.



\medskip

Then, for each $i \in I$, fix $P_i^\star \in \mathscr{P}_{i}^\star$. 
For each $i \in I(\varnothing)$, $P_i^\star$ identifies an action which is feasible at $\varnothing$. For each $i \notin I(\varnothing)$, $\mathscr{P}_{i}^\star$ is the trivial partition by Claim \ref{claim1}, hence $P_i^\star=\mathcal{S}_i$ indicates that no action is available at $\varnothing$. 
Notice that, with this, $(P_i^\star)_{i \in I}$ identifies an history $h$ of lenght $1$. 
Consider the sub-$Z$-reduced normal form, where the set of strategies is restricted to $P_i^\star$ for each $i \in I$. Let $Z^\star$ be the set of terminal histories consistent with $\prod_{j\in I} P_j^\star$. With the same argument as above, for each $i \in I$, there exists an unique partition $\mathscr{P}_{i}^{\star\star}$ of $P_i^\star$ defined by
$$
\textstyle \mathscr{P}_{i}^{\star\star}:=\sup \left\{\mathscr{P}_i \in \bm{\mathscr{P}}\text{{\scriptsize art}}(P_i^\star): \{\tilde{\zeta}\left(P_i\times \prod_{j \in I\setminus \{i\}}P_j^\star\right): P_i \in \mathscr{P}_i\}\in \bm{\mathscr{P}}\text{{\scriptsize art}}(Z^\star)\right\}.
$$
Similarly, player $j \in I$ is active at the history $h$ if and only if $\mathscr{P}_j^{\star\star} \neq \{P_j^\star\}$ and his set of feasible actions $F_j(h)$ can be labeled as the elements of $\mathscr{P}_j^{\star\star}$. 
At this point, for each player $i$, if two histories $h^\prime, h^{\prime\prime}$ of lenght $\le 1$ belong to the same information set of $i$, then they identify the same partition $\mathscr{P}_{i}^{\star\star}$ (hence, by construction, the same feasible actions). Conversely, if the histories $h^\prime, h^{\prime\prime}$ identify the same partition $\mathscr{P}_{i}^{\star\star}$ then by Claim \ref{claim4} they belong to the same information set.

\medskip

This algorithm can be inductively repeated, 
relabeling all profiles of actions of $G_0$ by profiles of cells of partitions of each $\mathcal{S}_i$. By construction, with this relabeling, two histories $h,h^\prime$ are on a same information set of a player $i$ if and only if 
$F_i(h)=F_i(h^\prime)$.
%
%
This allows to reconstruct uniquely (that is, up to isomorphisms) the whole game structure $G_0$, completing the proof. 
\end{proof}

\begin{lemma}\label{lem:minimalextensive}
Each extensive game structure $G \in \mathcal{G}$ has a unique minimal reduction, up to isomorphisms. 
\end{lemma}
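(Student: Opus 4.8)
The plan is to prove the two assertions implicit in the statement separately: first that a minimal reduction \emph{exists}, by a well-founded termination argument, and then that it is \emph{unique} up to isomorphism, by invoking Lemma \ref{lem:lemmaisomorphic}.

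For existence, I would attach to each $G$ the non-negative integer
\[
\mu(G):=\sum_{z \in Z}|z|=\sum_{h \in H}|Z(h)|,
\]
the total path length of the tree (here $|z|$ is the length of the terminal history $z$, and the second equality holds because each non-terminal $h$ is a proper prefix of exactly the terminals in $Z(h)$). The crux is to show that both $\gamma$ and $\sigma$ \emph{strictly} decrease $\mu$. Under $\sigma(G;h,\emph{\textbf{d}}_i)$ the nodes of $\emph{\textbf{d}}_i$ are absorbed into $h$, so every terminal in $Z(\emph{\textbf{d}}_i)=Z(h)$ loses exactly one level while, by the case distinction in the definition of $\sigma$, every other terminal keeps its depth; hence $\mu$ drops by $|Z(h)|\ge 2$. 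Under $\gamma(G;\emph{\textbf{h}}_i,\emph{\textbf{h}}_i^\prime)$ the move of player $i$ at $\emph{\textbf{h}}_i^\prime$ is shifted back onto $\emph{\textbf{h}}_i$, collapsing two successive $i$-moves into one, so every terminal in $Z(\emph{\textbf{h}}_i^\prime)=Z(\emph{\textbf{h}}_i,a_i^\star)$ loses one level and, again by the definition's case distinction (items (i)--(iii)), no other terminal changes depth; hence $\mu$ drops by $|Z(\emph{\textbf{h}}_i^\prime)|\ge 2$. Since $\mu$ is $\mathbb{N}_0$-valued, no infinite chain of transformations can exist: starting from $G$ and applying, as long as $G \notin \widehat{\mathcal{G}}$, any transformation whose image is nonempty (which exists precisely because then $G \in \mathrm{dom}(\gamma)\cup\mathrm{dom}(\sigma)$), the process must halt, and it halts exactly at a minimal game structure, which is by construction a reduction of $G$.

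For uniqueness, let $\widehat{G}_1$ and $\widehat{G}_2$ be two minimal reductions of $G$. By Lemmas \ref{lem:gamma} and \ref{lem:iota}, $\gamma$ and $\sigma$ are invariant transformations, and a reduction is a finite composition of such steps, so $G$ is behaviorally equivalent to each $\widehat{G}_k$ and therefore $\widehat{G}_1$ and $\widehat{G}_2$ are behaviorally equivalent to each other. As both are minimal, Lemma \ref{lem:lemmaisomorphic} yields $\widehat{G}_1=\widehat{G}_2$ up to isomorphism, which is the desired conclusion.

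The step demanding the most care is the strict decrease of $\mu$ under $\gamma$. The subtlety is that Coalescing may in fact \emph{increase} the number of histories, because the subtrees of the players active between $\emph{\textbf{h}}_i$ and $\emph{\textbf{h}}_i^\prime$ are replicated once for each action of $i$ at $\emph{\textbf{h}}_i^\prime$; this is exactly why a cardinality measure such as $|\bar{H}|$ is unsuitable. The point I would argue carefully is that these replicas live at the same depth as their originals, so that depth is weakly monotone terminal-by-terminal and $\mu$ is governed solely by the one-level collapse of the two $i$-moves, making the strict drop robust to the branching blow-up.
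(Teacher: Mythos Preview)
Your uniqueness argument is correct and matches the paper's: both invoke Lemma~\ref{lem:lemmaisomorphic} after observing that reductions preserve behavioral equivalence via Lemmas~\ref{lem:gamma} and~\ref{lem:iota}.

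The existence argument, however, has a genuine gap: the potential $\mu(G)=\sum_{z\in Z}|z|$ does \emph{not} always strictly decrease under $\sigma$ or under $\gamma$. The failure mode is exactly the one your last paragraph worries about, but it is not cured by your depth observation, and it affects $\sigma$ too. The issue is that in this framework several players may be simultaneously active at a node. When the nodes of $\emph{\textbf{d}}_i$ (for $\sigma$) or of $\emph{\textbf{h}}_i^\prime$ (for $\gamma$) carry active players other than $i$, shifting $i$'s action back deletes $i$'s component from that action profile but does not delete the profile itself: the remaining players still move there, so no level is lost along any terminal path. Concretely, take $I=\{1,2,3\}$, let player~$1$ alone move at $\varnothing$ with actions $\{L,R\}$, and let players~$2$ and~$3$ move simultaneously at both $(L)$ and $(R)$, each with information set $\{(L),(R)\}$. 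Then $\emph{\textbf{d}}_2=\{(L),(R)\}$ dominates $\varnothing$, and applying $\sigma$ shifts player~$2$ to the root while player~$3$ still moves at the (now four) children of $\varnothing$; every terminal remains at depth~$2$ and $\mu$ is unchanged. For $\gamma$, take $I=\{1,2\}$, let player~$1$ move at $\varnothing$ with $\{L,R\}$, make $(L)$ terminal, and let players~$1$ and~$2$ move simultaneously at $(R)$; then $\{\varnothing\}\ll_1\{(R)\}$, and coalescing turns each depth-$2$ terminal $(R,(a,x))$ into a depth-$2$ terminal $(a,x)$, again leaving $\mu$ fixed. So your sentence ``$\mu$ is governed solely by the one-level collapse of the two $i$-moves'' presumes that the vacated node disappears, which is false precisely when $|I(d)|\ge 2$.

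The paper avoids this by not using a potential at all: it sweeps the tree level by level, exhausting all coalescing and simultanizing opportunities at depth~$0$, then at depth~$1$, and so on, arguing that transformations applied at depth~$n$ neither disturb shallower histories nor create new opportunities there; termination then follows from the finite depth of the tree. If you want to rescue the potential-function route you will need more than $\mu$: for instance $\gamma$ always strictly reduces the total number of information sets, so a lexicographic pair handles $\gamma$, but you would still need a tiebreaker that strictly drops under every application of $\sigma$, and the obvious candidates (node count, total information-set depth, total player-move count) can all stay constant in the example above.
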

\begin{proof} 
Fix $G \in \mathcal{G}$. First, we show that there exists a minimal game structure $\widehat{G}$ which is behaviorally equivalent to $G$. 
The game $\widehat{G}$ is constructed by recursive application of transformations $\gamma$ and $\sigma$. The recursion is based on the length of histories that offer coalescing or simultanizing opportunities. Recall that we consider only finite game structures; therefore, all the collections defined below are finite as well.

\medskip

\textsc{Basis step.} Consider the root $\varnothing$, that is, the only history of length $0$. 
Define the set $\mathscr{C}_0$ made by all pairs $(\{\varnothing\}, \emph{\textbf{h}}_i)$ such that $\{\varnothing\}\ll_i \emph{\textbf{h}}_i$ (so that $i \in I(\varnothing)$ and $\{\varnothing\} \in \emph{\textbf{H}}_i$). 
Define also the set $\mathscr{S}_0$ made by all pairs $(\varnothing, \emph{\textbf{d}}_i)$ such that $\varnothing\lessdot_i \emph{\textbf{d}}_i$ (so that $i\notin I(\varnothing)$). 
Of course, both sets $\mathscr{C}_0$ and $\mathscr{S}_0$ are finite. 
Apply the transformation $\gamma$ at all pairs $(\{\varnothing\}, \emph{\textbf{h}}_i) \in \mathscr{C}_0$; 
to be precise, enumerate the elements of $\mathscr{C}_0$ as $(\{\varnothing\},w_1),\ldots,(\{\varnothing\},w_k)$, in some order, and consider the finite sequence of games $\gamma(G; \{\varnothing\},w_1)$, $\gamma(\gamma(G; \{\varnothing\},w_1); \{\varnothing\},w_2^\prime)$  (where $(\{\varnothing\},w_2^\prime)$ is the pair corresponding to $(\{\varnothing\},w_2)$ in $\gamma(G; \{\varnothing\},w_1)$), $\ldots$, 
$\gamma(\cdots \gamma(\gamma(G; \{\varnothing\},w_1); \{\varnothing\},w_2^\prime)\cdots; \{\varnothing\},w_k^\prime)$ and note that the latter game does not depend on the chosen enumeration of $\mathscr{C}_0$. 
Then, similarly, apply the transformation $\sigma$ in the latter game at the corresponding pairs $(\varnothing, \emph{\textbf{d}}_i) \in \mathscr{S}_0$ and denote the obtained game by $G(1)$ (where $G(1)=G$ if both $\mathscr{C}_0$ and $\mathscr{S}_0$ are empty). Note that, since all transformations are applied at the root of the game, all coalescing and simultanizing opportunities are preserved at each transformation.

\medskip

\textsc{Induction Step.} Suppose that $G(n)$ has been defined for some positive integer $n$. 
Define the set $\mathscr{C}_n$ made by all pairs $(\emph{\textbf{h}}_i, \emph{\textbf{h}}_i^\prime)$ of the game $G(n)$ such that $\emph{\textbf{h}}_i\ll_i \emph{\textbf{h}}_i^\prime$ and there exists $h \in \emph{\textbf{h}}_i$ with lenght $n$ (in particular, $i \in I(h)$). 
Define also the set $\mathscr{S}_n$ made by all pairs $(h, \emph{\textbf{d}}_i)$ of the game $G(n)$ such that $h\lessdot_i \emph{\textbf{d}}_i$ and $h$ has lenght $n$ (so that $i\notin I(h)$ for each such history $h$). 
Apply the transformation $\gamma$ at all pairs $(\emph{\textbf{h}}_i, \emph{\textbf{h}}_i^\prime) \in \mathscr{C}_n$. Then, apply the transformation $\sigma$ at all pairs $(h, \emph{\textbf{d}}_i)\in \mathscr{S}_n$ and denote by $G(n+1)$ the obtained game (where $G(n+1)=G(n)$ if both $\mathscr{C}_n$ and $\mathscr{S}_n$ are empty). Similarly, note that, since all transformation are applied at the same height of $G(n)$, all coalescing and simultanizing opportunities are preserved at each transformation.

\medskip

By Lemma \ref{lem:gamma} and Lemma \ref{lem:iota}, $G(n)$ is behaviorally equivalent to $G$ for every $n\ge 1$. In addition, since $G$ is finite, there exists an integer $n_0 \ge 1$ such that $G(n)=G(n_0)$ for all $n\ge n_0$. Set $\widehat{G}:=G(n_0)$. We claim that $\widehat{G}\in \widehat{\mathcal{G}}$. 
Let us suppose by contradiction that $\widehat{G} \in \mathrm{dom}(\gamma)$, i.e., there exists a coalescing opportunity in $\widehat{G}$, let us say with $\emph{\textbf{h}}_i \ll_i \emph{\textbf{h}}_i^\prime$. Let $k$ be the minimal lenght of histories $h \in \emph{\textbf{h}}_i$ and note that $k \in \{0,1,\ldots,n_0-1\}$. This implies that $G(k+1)$ has a coalescing oppurtinity at an information set with an history of lenght $k$, which contradicts our costruction. With a similar argument, we have $\widehat{G} \notin \mathrm{dom}(\sigma)$. Therefore $\widehat{G}$ is minimal, hence it is a reduction of $G$.

\medskip

To complete the proof, we need to show that $\widehat{G}$ is the unique reduction of $G$, up to isomorphisms. Indeed, suppose that $\widehat{G}^\prime$ is another minimal game structure which is behaviorally equivalent to $G$. Then $\widehat{G}$ and $\widehat{G}^\prime$ are two behaviorally equivalent minimal game structures. 
It follows from Lemma \ref{lem:lemmaisomorphic} that $\widehat{G}=\widehat{G}^\prime$, up to isomorphisms. 
\end{proof}

\begin{example}\label{ex:reconstruction} Consider the game $\widehat{G} \in \widehat{\mathcal{G}}$ in the right hand side of Figure \ref{fig:noindependent}. Its $Z$-reduced normal form is given in Figure \ref{fig_BoS_dissipative}.


\begin{figure}[!htbp]
\centering
\begin{tikzpicture}
[scale=1.5,description/.style=auto]
\node(d3) at (0,0){
\begin{tabular}{|c|c|c|}
\hline
{\small \textsl{1}\textbackslash \textsl{2}} & {\small $\text{ }$\,\,\,$x$\,\,\,$\text{ }$} & {\small $\text{ }$\,\,\,$y$\,\,\,$\text{ }$}  \\ \hline
{\small $a$} & {\footnotesize $z_1$} & {\footnotesize $z_3$}  \\ \hline
{\small $b$} & {\footnotesize $z_2$} & {\footnotesize $z_4$}  \\ \hline
{\small $u$} & {\footnotesize $z_5$} & {\footnotesize $z_5$}  \\ \hline
\end{tabular}
};
\end{tikzpicture}
\caption{$\mathrm{rn}_Z(\widehat{G})$ of the game in Figure \ref{fig:noindependent}.\label{fig_BoS_dissipative}}
\end{figure} 


It follows from the algorithm described in Lemma \ref{lem:lemmaisomorphic} that player \textsl{1} is active at the root $\varnothing$ with three feasible actions, whereas player \textsl{2} is inactive at $\varnothing$ because the unique partition of $\{x,y\}$ which divides the terminal paths in disjoint sets is $\{x,y\}$ itself. At this point, for the sub-games starting at the histories $(a)$ and $(b)$, player \textsl{2} is active and the finest partition dividing the terminal paths in disjoint sets is $\{\{x\},\{y\}\}$. It follows that player \textsl{2} is active at such histories, they are on the same information set, and he has two available actions. Finally, player \textsl{2} is inactive at the history $(u)$. In other words, we constructed the game $\widehat{G}$ whose extensive game structure is represented in the right hand side of Figure \ref{fig:noindependent}.
\end{example}

We can thus provide a characterization of invariant transformations.
\begin{lemma}\label{thm:maininvariant}
A correspondence $T: \mathcal{G} \twoheadrightarrow \mathcal{G}$ with $\mathrm{dom}(T) \neq \emptyset$ is an invariant transformation if and only if it satisfies \eqref{eq:condition}.
\end{lemma}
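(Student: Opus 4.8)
The plan is to deduce the statement directly from our main characterization, Theorem~\ref{th:mainresult}, applied pointwise to pairs $(G,G')$ with $G' \in T(G)$. Unwinding Definition~\ref{def:invarianttransformation}, a correspondence $T$ with $\mathrm{dom}(T) \neq \emptyset$ is an invariant transformation precisely when $G$ and $G'$ are behaviorally equivalent for every $G \in \mathrm{dom}(T)$ and every $G' \in T(G)$. Since $G' \in T(G)$ forces $G \in \mathrm{dom}(T)$, this is the same as requiring behavioral equivalence of $G$ and $G'$ for all $G \in \mathcal{G}$ and all $G' \in T(G)$, which aligns the outer quantifiers of the target condition~\eqref{eq:condition} with those of the definition.

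First I would invoke the equivalence \ref{item:a1} $\Leftrightarrow$ \ref{item:a4} of Theorem~\ref{th:mainresult}: two structures $G$ and $G'$ are behaviorally equivalent if and only if $G$ can be transformed into $G'$, up to isomorphisms, through a (possibly empty) finite chain of transformations drawn from $\{\gamma, \gamma^{-1}, \sigma, \sigma^{-1}\}$. It then remains only to observe that the existence of such a chain is exactly the assertion inside~\eqref{eq:condition}. Indeed, spelling out the composition of correspondences from Section~\ref{sec:notation}, $G' \in (\iota_1 \circ \cdots \circ \iota_n)(G)$ holds if and only if there are intermediate structures $G = H_n, H_{n-1}, \ldots, H_0 = G'$ with $H_{k-1} \in \iota_k(H_k)$ for each $k$; that is, if and only if $G$ is connected to $G'$ by the chain $\iota_n, \ldots, \iota_1$. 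Reading this up to isomorphisms gives $G' \overset{\mathrm{iso}}{\in} (\iota_1 \circ \cdots \circ \iota_n)(G)$, and the convention $(\iota_1 \circ \cdots \circ \iota_n)(G) = \{G\}$ for $n = 0$ accounts for the empty chain (the case $G = G'$ up to isomorphism).

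Putting these two observations together closes both implications at once: for each admissible pair $(G, G')$, behavioral equivalence (which characterizes invariance of $T$ on that pair) is logically identical to the existence of a chain witnessing $G' \overset{\mathrm{iso}}{\in} (\iota_1 \circ \cdots \circ \iota_n)(G)$ (which is the clause of~\eqref{eq:condition}). I expect the only delicate point to be bookkeeping rather than mathematics: one must check that the ``up to isomorphisms'' qualifier and the direction and order of composition from Section~\ref{sec:notation} are matched consistently with the chain in \ref{item:a4}, and that the vacuous case $G \notin \mathrm{dom}(T)$ is handled the same way in both conditions (where $T(G) = \emptyset$ makes the inner quantifier trivially true). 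No new estimates or constructions are needed, since all the substance has been absorbed into Theorem~\ref{th:mainresult}.
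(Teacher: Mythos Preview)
Your argument is mathematically sound in isolation, but it creates a circular dependency within the paper's logical structure. In the paper, Lemma~\ref{thm:maininvariant} is proved \emph{before} Theorem~\ref{th:mainresult} and is explicitly invoked in the proof of the implication \ref{item:a1} $\Rightarrow$ \ref{item:a4} of that theorem. So appealing to Theorem~\ref{th:mainresult} here amounts to assuming what you are trying to prove.

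The paper's proof avoids this by working directly with the ingredients underlying Theorem~\ref{th:mainresult}: one direction is Remark~\ref{remark:compositioninvarianttransformations}; for the converse, given $G' \in T(G)$ with $G$ and $G'$ behaviorally equivalent, Lemma~\ref{lem:minimalextensive} produces minimal reductions $\widehat{G}$ and $\widehat{G'}$ via finite chains of $\gamma$'s and $\sigma$'s, and Lemma~\ref{lem:lemmaisomorphic} forces $\widehat{G} \simeq \widehat{G'}$, so concatenating one chain with the inverse of the other yields the required expression for condition~\eqref{eq:condition}. Your approach can be salvaged by this same reorganization: either cite Lemmas~\ref{lem:lemmaisomorphic} and~\ref{lem:minimalextensive} directly (as the paper does), or first prove \ref{item:a1} $\Leftrightarrow$ \ref{item:a2} $\Rightarrow$ \ref{item:a4} in Theorem~\ref{th:mainresult} independently of the present lemma (which is easy, since a common minimal reduction immediately gives a connecting chain) and only then deduce Lemma~\ref{thm:maininvariant} as you propose.
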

\begin{proof}
By  
Remark \ref{remark:compositioninvarianttransformations}, $\mathcal{T}$ contains all the 
correspondences $T: \mathcal{G} \twoheadrightarrow \mathcal{G}$ with $\mathrm{dom}(T) \neq \emptyset$ which satisfy \eqref{eq:condition}. 
Conversely, fix an invariant transformation $T$ and an extensive game structure $G \in \mathrm{dom}(T)$ (which is possible, since $\mathrm{dom}(T)\neq \emptyset$). 
Then, for each $G^\prime \in T(G)$, we have that $G$ is behaviorally equivalent to $G^\prime$, i.e., $\mathrm{rn}_Z(G) \simeq \mathrm{rn}_Z(G^\prime)$. 
By Lemma \ref{lem:lemmaisomorphic}, there exist unique minimal game structures $\widehat{G}$ and $\widehat{G^\prime}$ with $Z$-reduced normal forms $\mathrm{rn}_Z(G)$ and $\mathrm{rn}_Z(G^\prime)$, respectively. 
It follows by Lemma \ref{lem:minimalextensive} that $\widehat{G}$ and $\widehat{G^\prime}$ are the minimal reductions of $G$ and $G^\prime$, respectively. 
Therefore there exist $n,k \in \mathbb{N}_0$ and $\varphi_1,\ldots,\varphi_n, \psi_1,\ldots,\psi_k \in \{\gamma,\sigma\}$ such that 
$$
\widehat{G} \in (\varphi_1 \circ \cdots \circ \varphi_n)(G) \,\,\,\,\text{ and }\,\,\,\,\widehat{G^\prime} \in (\psi_1 \circ \cdots \circ \psi_k)(G^\prime).
$$
This implies that $G^\prime \overset{\mathrm{iso}}{\in} (\psi_k^{-1} \circ \cdots \circ \psi_1^{-1} \circ \varphi_1 \circ \cdots \circ \varphi_n)(G)$, so that $T$ satisfies \eqref{eq:condition}.
\end{proof}

Finally, we are ready to prove our main result.
\begin{proof}[Proof of Theorem \ref{th:mainresult}]
\ref{item:a1} $\Longleftrightarrow$ \ref{item:a2}. The \textsc{if} part is clear. 
The \textsc{only if} part follows from Lemma \ref{lem:lemmaisomorphic} and Lemma \ref{lem:minimalextensive}.

\ref{item:a2} $\Longleftrightarrow$ \ref{item:a3}. The \textsc{only if} part is clear. 
The \textsc{if} part follows from Lemma \ref{lem:minimalextensive}.


\ref{item:a1} $\Longleftrightarrow$ \ref{item:a4}. The \textsc{if} part is clear, cf. Remark \ref{remark:compositioninvarianttransformations}. Conversely, if $G$ is behaviorally equivalent to $G^\prime$ then there exists an invariant transformation $T \in \mathcal{T}$ such that $G^\prime \in T(G)$. By Lemma \ref{thm:maininvariant}, $T: \mathcal{G}\twoheadrightarrow \mathcal{G}$ is a correspondence with $\mathrm{dom}(T)\neq \emptyset$ satisfying \eqref{eq:condition}. This means that there exist $n \in \mathbb{N}$ and $\iota_1,\ldots,\iota_n \in \{\gamma,\gamma^{-1},\sigma,\sigma^{-1}\}$ such that $G^\prime \overset{\mathrm{iso}}{\in} (\iota_1 \circ \cdots \circ \iota_n)(G)$.
This completes the proof.
\end{proof}

\subsection{Concluding remarks}\label{sec:conclusions}

The four transformations put forward by Thompson \cite{Thompson} are the
starting point for the development of many other equivalence relations. Some
solution concepts, like Nash equilibrium, iterated (weak or strict)
dominance, and strategic stability (Kohlberg and Mertens \cite{KohlbergMertens}) are
invariant to transformations that preserve the reduced normal form. Others,
like subgame perfection or trembling hand perfection in the agent normal
form (Selten \cite{Seltenspe, Seltenthp}), are only invariant to 
Interchanging/Simultanizing.  
The latter is almost a must if games are
represented in the traditional way of Kuhn \cite{MR0054924}, which does not allow for
a direct representation of simultaneity, but not if the formalism allows for
a direct representation of simultaneity, as in our case. It is worth noting
that some solution concepts, like weak perfect Bayesian
equilibrium (Myerson \cite{Myersonbook}, Mas Colell
et al. \cite{MasColell}) are not even invariant to Interchanging, as shown by
Myerson \cite[Figures 4.6 and 4.7]{Myersonbook}. 

We do not see different forms of invariance as strategic rationality
requirements, but rather as interesting properties that \emph{independently justified} solution concepts may or may not satisfy. In
particular, different notions of extensive-form rationalizability with an
independent epistemic foundation are based on the
notion of sequential best reply described in the Introduction, which applies
to classes of behaviorally equivalent strategies. With this, we look at
transformations that preserve behavioral equivalence and prove that two
simple invariant transformations, i.e., Interchanging/Simultanizing 
and Coalescing Moves/Sequential Agent Splitting, are sufficient to
characterize the notion of behavioral equivalence.%


\section{Acknowledgements}\label{sec:acknow}  
The authors are grateful to the editors Giacomo Bonanno and 
Wiebe can der Hoek and two anonymous reviewers for their careful reading and  thoughtful suggestions which improved the overall readibility of the article. The authors also thank Klaus Ritzberger (University of London) and Carlo Maria Cusumano, Nicodemo De Vito, Francesco Fabbri, David Ruiz Gomez, Paola Moscariello (Universit\`a Bocconi) for several useful comments. 
This research did not receive any specific grant from funding agencies in the public, commercial, or not-for-profit sectors.

\bibliographystyle{amsplain}
\bibliography{eefg}

\end{document}